\documentclass[a4paper,twocolumn,10pt,accepted=2025-10-08]{quantumarticle}
\pdfoutput=1

\usepackage[numbers]{natbib}

\usepackage[dvipsnames]{xcolor}

\usepackage[utf8]{inputenc}
\usepackage[english]{babel}
\usepackage[T1]{fontenc}
\usepackage{amsthm}
\usepackage{thmtools} 
\usepackage{mathtools}
\usepackage{physics}
\usepackage{bm} 
\usepackage{bbm} 
\usepackage{braket}
\usepackage{dsfont}
\usepackage{amsmath}
\usepackage{amssymb}
\usepackage{float}

        \newcommand*{\idty}{\mathds{1}}
        \newcommand*{\N}{\mathbb{N}}

        \newcommand*{\R}{\mathbb{R}}
        \newcommand*{\C}{\mathbb{C}}
        \newcommand*{\hil}{\mathcal{H}}
        \newcommand*{\defdot}{\,\cdot\,} 
        \newcommand*{\defcolon}{\,:\,} 
        
        \newcommand*{\bits}{\mathbb{B}^{N}} 
        \newcommand*{\states}{\mathfrak{S}(\hil)} 
        \newcommand*{\purestates}{\mathcal{P}(\hil)}
        \newcommand*{\ub}{U_{B}} 
        \newcommand*{\uc}{U_{C}} 
        \newcommand*{\Ham}{\Delta} 
        \newcommand*{\smv}{\mu} 
        \newcommand*{\lie}{\mathfrak{g}} 
        \newcommand*{\su}{\mathfrak{su}} 
        \newcommand*{\SU}{\mathrm{SU}} 
        \newcommand*{\tb}{\mathcal{T}_{B}} 


\usepackage[colorlinks]{hyperref}
\hypersetup{
    colorlinks = true,
    citecolor=PineGreen,
    linkcolor=MidnightBlue,
    urlcolor=PineGreen
}


\declaretheorem[style=plain]{theorem}

\declaretheorem[style=plain,sibling=theorem]{corollary}

\declaretheorem[style=plain,sibling=theorem]{observation}
\declaretheorem[style=plain,sibling=theorem]{proposition}

\newcommand*{\itheader}[1]{
    \begin{center}
        \textbf{\textsf{#1}}\\
    \end{center}
}

\usepackage{etoolbox}
\apptocmd{\sloppy}{\hbadness 10000\relax}{}{}

\usepackage{silence}
\WarningFilter{ltxutil}{Repair the float}
\WarningFilter{caption}{Unknown document class}

\begin{document}

\title{Deep-Circuit QAOA}

\author{Gereon Koßmann}
\email{kossmann@physik.rwth-aachen.de}
\affiliation{Institute for Quantum Information, RWTH Aachen University, Aachen, Germany}
\author{Lennart Binkowski}
\email{lennart.binkowski@itp.uni-hannover.de}
\affiliation{Institute for Theoretical Physics, Leibniz University Hannover}
\author{Lauritz van Luijk}
\affiliation{Institute for Theoretical Physics, Leibniz University Hannover}
\author{Timo Ziegler}
\email{timo.ziegler@itp.uni-hannover.de}
\affiliation{Institute for Theoretical Physics, Leibniz University Hannover}
\affiliation{Volkswagen AG, Berliner Ring 2, 38440 Wolfsburg}
\author{Ren\'{e} Schwonnek}
\email{rene.schwonnek@itp.uni-hannover.de}
\affiliation{Institute for Theoretical Physics, Leibniz University Hannover}

\begin{abstract}
Despite its popularity, several empirical and theoretical studies suggest that the quantum approximate optimization algorithm (QAOA) has persistent issues in providing a substantial practical advantage.
Numerical results for few qubits and shallow circuits are, at best, ambiguous, and the well-studied barren plateau phenomenon draws a rather sobering picture for deeper circuits.
However, as more and more sophisticated strategies are proposed to circumvent barren plateaus, it stands to reason which issues are actually fundamental and which merely constitute -- admittedly difficult -- engineering tasks.
By shifting the scope from the usually considered parameter landscape to the quantum state space's geometry we can distinguish between problems that are fundamentally difficult to solve, independently of the parameterization, and those for which there could at least exist a favorable parameterization.
Here, we find clear evidence for a 'no free lunch'-behavior of QAOA on a general optimization task with no further structure; individual cases have, however, to be analyzed more carefully.

Based on our analysis, we propose and justify a performance indicator for the deep-circuit QAOA that can be accessed by solely evaluating statistical properties of the classical objective function.
We further discuss the various favorable properties a generic QAOA instance has in the asymptotic regime of infinitely many gates, and elaborate on the immanent drawbacks of finite circuits.
We provide several numerical examples of a deep-circuit QAOA method based on local search strategies and find that - in alignment with our performance indicator - some special function classes, like QUBOs, indeed admit a favorable optimization landscape.
\end{abstract}

\maketitle

\section{\label{section:Introduction}Introduction}

Within recent years, \emph{variational quantum algorithms} (VQAs) \cite{Cerezo2021VariationalQuantumAlgorithms} have become the focus of a significant amount of research.
The intuitive idea of this class of algorithms is to use classical optimization routines in order to variationally combine small building blocks of quantum circuits into bigger ones that may give good solutions to difficult problems. 

Due to its simplicity and versatility, the \emph{quantum approximate optimization algorithm} (QAOA) \cite{Farhi2014AQuantumApproximateOptimizationAlgorithm} is one of the most prominent types of algorithms from the growing family of VQAs.
It was designed to tackle the class of \emph{combinatorial optimization problems} (see \autoref{section:Preliminaries}).
This class includes, e.g., problems like MAXCUT, MAX-$k$-SAT, or the traveling salesperson problem and can, in general, be considered to be of high practical relevance for real world applications.
Algorithms for non-trivial instances of these can already be implemented on systems with only a few dozen of qubits; e.g. \cite[Figure 4]{Harrigan2021QuantumApproximateOptimizationOfNonPlanarGraphProblemsOnAPlanarSuperconductingProcessor} use up to $23$ qubits.
This is why QAOA attracted, beyond a pure academic interest, also a lot of attention by several of the first commercial providers of quantum software.

Despite the high hopes that are connected to these algorithms and their capabilities, a true proof or demonstration of any practical advantage resulting from the application of a VQA like the QAOA to any real world problem is, however, still pending.
Respecting the limitations of existing technology, we neither have large fault tolerant quantum computers nor can we simulate many qubits, a lot of research focus was put on implementations with few qubits and shallow quantum circuits \cite{Willsch2020BenchmarkingTheQuantumApproximateOptimizationAlgorithm} and therefore basically `proofs of concept'.

In this regime the hopes especially put into QAOA seem to face substantial obstacles:
Already in \cite{Farhi2014AQuantumApproximateOptimizationAlgorithm} it was numerically shown, that the MAXCUT problem for $3$-regular graphs is not (reliable) solvable with low-depth circuits; only increased circuit depths yield decent solution quality \cite{Crooks2018PerformanceOfTheQuantumApproximateOptimizationAlgorithmOnTheMaximumCutProblem}.
Low-depth case studies are done in a broad way through the literature leading to the, at least empirical, conclusion that low-depth QAOA circuits will not give reliable results for complicated problem instances \cite{Lykov2022SamplingFrequencyThresholdsForTheQuantumAdvantageOfTheQuantumApproximateOptimizationAlgorithm,Bravyi2020ObstaclesToVariationalQuantumOptimizationFromSymmetryProtection,Hastings2019ClassicalAndQuantumBoundedDepthApproximationAlgorithms}.

In this work we extend the investigation of the potential perspectives and limitations of the QAOA to the regime of deep quantum circuits.
Central questions that guide us on this path are:
What are distinctive features of this regime?
Are there effects and methods that become present for deep circuits that are not possible in a low-depth regime?
\emph{And most importantly}, on what classes of problems could QAOA perform well when circuit depth is not a hard limitation? 

By this work we try to provide at least some clear answers to those questions.
These answers should, whenever possible, admit a certain aspiration of mathematical rigor and will be backed up by clear intuitions of possible mechanisms and numerical case study evidence otherwise.
For a bigger picture we can, however, only make a beginning.

A distinctive feature for the deep-circuit regime concerns the types of classical variational methods that could be employed.
In the deep-circuit regime, we are confronted with a rapidly growing range of classical control parameters.
Here the classical variation routines that are typically employed in low-depth QAOA quickly run out of their efficiency range.
Instead, we will consider local search routines as the characteristic class of variation methods of the deep-circuit regime, since they are naturally suited for optimizations in this situation. 

In the first part of \autoref{section:AsymptoticCircuits}, we give a clarified view on the QAOA optimization landscape on state space that is, to our surprise, unexpectedly seldom employed.
However, it fits well for analyzing local search routines which notably do not admit for a fixed circuit length.
The basic QAOA Hamiltonians are treated as generators of a Lie algebra whereby the optimization landscape is given by an orbit of its Lie group.
The resulting landscape reveals a nice geometry that can be analyzed by basic tools from differential geometry. 
This view is well investigated in the field of optimal control theory from which we borrow methods and results.
Similar methodologies have also been used to further study barren plateaus \cite{Larocca2022DiagnosingBarrenPlateausWithToolsFromQuantumOptimalControl}.

In the second part of \autoref{section:AsymptoticCircuits}, we start our investigations by studying the limit of asymptotic circuits.
Here we find that a generic QAOA instance has many favorable properties: for example, a unique local minimum that gives us the optimal solution of the underlying classical optimization problem.
This vaguely means that a local search that could exploit arbitrary circuit depths and employ second order gradient methods will succeed in solving almost any problem.
This result can be seen in line with findings that were, e.g., reported in \cite{Arenz2021ProgressTowardFavorableLandscapesInQuantumCombinatorialOptimization} showing that variational quantum algorithms with an exponential amount of control parameters can avoid local traps.

These regimes are, however, far from any practical use.
In \autoref{section:DeepCircuits}, we turn our attention to deep, but not asymptotically deep, circuits. Here many of the nice asymptotic properties vanish.
Saddle points turn into effective local minima, and we get a landscape with a continuum of local attractors and potentially exponentially many local traps.
However, a characterization of local traps reveals that statistical distribution properties of traps, like amount, sizes, and depths, only depend on the classical objective function, and can be used to predict the performance of the deep-circuit QAOA in this generically unfavorable setting.
Our method gives strongly problem-dependent quantities that serve as an evaluation basis for the success of the QAOA, and we collect them in a single performance indicator.
Especially the lack of success in many QAOA instances could be explained from this perspective.
As an example for a 'no free lunch'-behavior we come up with, in a very simple way, randomly generated target functions that impose an optimization landscape with unfavorably distributed traps.
In contrast, we see that certain special problem classes, like QUBO, have the tendency to admit a favorable landscape.

In \autoref{section:NumericalResults} we look at our results from a practical perspective by numerically simulating deep-circuit QAOA (up to 1000 layers of non-decomposed QAOA-gates) based on a simple downhill simplex method.
Results indicate the QAOA performs well only on some classes of optimization problems.
Most notably, the numerical results are in line with the prior introduced performance indicator.
Furthermore, we numerically justify the basic intuitions regarding traps and their influence on local search strategies.
An exemplary step size analysis also shows that, unexpectedly, the same problem instance can lead to very different results when approached with different auxiliary parameters.

\section{\label{section:Preliminaries}Preliminaries}

For the reader's convenience we will start with a brief review of the VQA approach, elaborate on the specific form of the QAOA, and outline our view on optimization landscapes, which we will use throughout this work.
Further, we briefly elaborate on the barren plateau phenomenon and its many variants.

\subsection{\label{subsection:VariationalQuantumAlgorithms}Variational Quantum Algorithms}

Consider a generic unconstrained combinatorial minimization problem:
\begin{align}\label{equation:CombinatorialMinimizationProblem}
    \min_{z \in \bits} f(z),
\end{align}
where $\bits \coloneqq \{0, 1\}^{N}$ denotes the set of bit strings of length $N$.
We adhere to the following standard encoding procedure, although more problem-specific and qubit-efficient approaches may be available \cite{Tan2021QubitEfficientEncodingSchemesForBinaryOptimisationProblems,Leonidas2023QubitEfficientQuantumAlgorithmsForTheVehicleRoutingProblemOnNISQProcessors}:
identify each bit string $z$ with a computational basis state $\ket{z}$ of the $N$-qubit space $\hil \coloneqq \C^{2^{N}}$ and translate the objective function $f$ into an objective Hamiltonian, diagonal in the computational basis
\begin{align}\label{equation:Encoding}
    f \mapsto H \coloneqq \sum_{z \in \bits} f(z) \ketbra{z}{z}.
\end{align}

Despite mainly working with pure states, we will advocate to describe the state of a quantum system within the formalism of density matrices, i.e.\ positive semi-definite matrices $\rho \succeq 0$ of unit trace.
We denote the state space by $\states$.
An immediate consequence of using this natural formalism is that the expectation value of an observable $H$ 
\begin{align}\label{equation:Functional}
    F(\rho) \coloneqq \tr(\rho H)
\end{align}
defines a linear functional $F : \states \rightarrow \R$.
Note that by the Rayleigh-Ritz principle, the original minimization task \eqref{equation:CombinatorialMinimizationProblem} is equivalent to finding a minimum of $F$.
For our study of derivatives, critical points, and minima of $F$, this linearity will turn out as very beneficial.

In order to approximately minimize $F$, a general VQA now utilizes parameterized trial states obtained by applying a parameterized quantum circuit (PQC) to an initial state.
For the QAOA, the initial state is given by the pure state $\ketbra{+}{+}$, where 
\begin{align}\label{equation:PlusState}
    \ket{+} = \bigotimes_{n = 1}^{N} \frac{1}{\sqrt{2}} (\ket{0} + \ket{1}) = \frac{1}{\sqrt{2^{N}}} \sum_{z \in \bits} \ket{z}.
\end{align}
It is the non-degenerate ground state of the Hamiltonian\footnote{The original QAOA is formulated for maximization tasks; here $B$ is defined with the opposite sign.} 
\begin{align}\label{equation:B}
    B = - \sum_{n = 1}^{N} \sigma_{x}^{(n)}.
\end{align}
The unitary evolution generated from $B$ is commonly referred to as 'mixing'.

For the QAOA, we have two basic families of gates
\begin{subequations}
\begin{eqnarray}
    \ub(\beta) &=& e^{- i \beta B}\quad\text{and} \label{equation:MixerGate}\\
    \uc(\gamma) &=& e^{- i \gamma C}, \label{equation:PhaseSeparatorGate} 
\end{eqnarray}
\end{subequations}
where we used the convention of taking $C=H-\tr(H)\idty$ as a traceless generator for our second unitary, which is usually coined the 'phase separator'.
Note that taking all generators traceless does not change the evolution on the level of density matrices.

A full QAOA circuit is then composed of these basic families. 
The corresponding parameters are typically labeled by $\vec{\beta} = (\beta_{1}, \ldots, \beta_{p}) \in [0, \pi)^{p}$ and $\vec{\gamma} = (\gamma_{1}, \ldots, \gamma_{p}) \in \R^{p}$ with $p \in \N$, where the quantity $p$ specifies the circuit depth.
A fully parameterized QAOA circuit is therefore given by
\begin{align}\label{equation:QAOAUnitary}
     V(\vec{\beta}, \vec{\gamma}) \coloneqq \prod_{q = 1}^{p} \ub(\beta_{q}) \uc(\gamma_{q})
\end{align}
and gives rise to the parameter function
\begin{align}\label{equation:ParameterFunction}
    \tilde{F}(\vec{\beta}, \vec{\gamma}) \coloneqq F(V(\vec{\beta}, \vec{\gamma}) \ketbra{+}{+} V(\vec{\beta}, \vec{\gamma})^{*} H).
\end{align}

\subsection{\label{subsection:TheDeepCircuitRegime}The Deep-Circuit Regime}

In the majority of the existing literature circuits with a depth $p \approx 10$ or less are taken into account.
One practical reason for this restriction is the still too high gate noise in existing quantum computers.

In this work we  will refer to deep circuits as those that substantially exceed the scale of $p \approx 10$, potentially by orders of magnitudes. For example, the circuit depths used in our numerical simulations ranged from $100$ to $3 \cdot 10^{4}$. From a technological perspective, this regime is not yet reliably realizable on most existing devices, but clearly at the edge of what can be expected from the technological developments in the near and midterm future. The key achievements we are here hoping for are improved gate fidelities that could, e.g.\ be enabled by improved error mitigation techniques \cite{Endo2018PracticalQuantumErrorMitigationForNearFutureApplications} or even a full implementation of (few) error corrected qubits \cite{Egan2021FaultTolerantControlOfAnErrorCorrectedQubit}.

A central ingredient in any VQA are the classical routines involved in optimizing the circuit parameters.
Here the types of classical optimization algorithms that can be used for shallow or deep circuits might differ substantially.
In shallow circuits all variational parameters can be actively optimized at the same time.
Typical optimizers in use are for example COBYLA, Gradient Descent, BFGS, and Nelder-Mead.
Those may however not perform well for deep circuits.
An optimization of a largely growing amount of parameters might quickly become unpractical, either by an increase of computational hardness or by the commonly observed barren plateaus (BP) phenomenon \cite{Larocca2025BarrenPlateausInVariationalQuantumComputing}.

For deep circuits, we will therefore focus on optimization routines that follow local search strategies, i.e. those which successively only vary a few parameters at the same time within a small range of variation.
We tend to mark the necessity of these routines as another distinguishing feature of the deep-circuit regime.
As a naive prototype, we use a very simple downhill simplex method in our numerical studies in \autoref{section:NumericalResults}, being fully aware that a huge variety of very elaborated and versatile local search routines exist.
For layer-wise optimization, BPs have also been reported in some particular instances \cite{Campos2021AbruptTransitionsInVariationalQuantumCircuitTraining}.
These instances, however, substantially differ from our setting, and we can attribute the phenomenon of vanishing gradients to `local traps' instead.

\subsection{\label{subsection:OptimizationLandscapes}Optimization Landscapes}

In the context of the QAOA, the term optimization landscape commonly refers to the 'landscape' of values that the function $\tilde{F}$ takes on a parameter space, which is a subset of $\R^{2 p}$ (or on a two dimensional subspace whenever a graphical illustration is provided).
However, for local search strategies, the final length $p$ of a circuit, and therefore the parameter space, is typically not fixed.
Hence, the above notion of 'optimization landscape' does not really apply here.
This motivates us to think of optimization landscapes in a different way: 

We simply regard the functional $F$ as a function that defines a 'landscape' on the state space $\states$, or more precisely, on the subset of the states that are potentially accessible by a sequence of QAOA gates.
To our big surprise, this perspective is rather rare to find within the existing literature on VQAs.
Its indeed very nice geometry will therefore be fully clarified within the next section.

In order to properly distinguish notions we will from now on refer to landscape in $\R^{2 p}$ as the \emph{parameter landscape} of $\tilde{F}$ and to the latter one as the \emph{state space landscape} of $F$.
Using the state space landscape to determine properties of an QAOA algorithm has at least three immediate advantages: 
\begin{itemize}
    \item[(i)] Circuits with varying length can be properly expressed and compared within the same picture. 
    \item[(ii)] Local overparameterization is avoided. Different sets of parameters could refer to almost the same point in state space in no obvious way. This can, for example, lead to a situation in which there are many different local minima in the parameter landscape that however only correspond to one and the same minimum in the state space landscape. 
    \item[(iii)] The behavior of different classes of VQAs can be compared within the same picture.  
\end{itemize}
In this work, especially the points (i) and (ii) will be essential for analyzing the performance perspectives of deep QAOA circuits by characterizing the distribution properties of local minima and critical points.

\subsection{\label{subsection:BarrenPlateaus}Barren Plateaus}

In contrast to the state space landscape, the prominently featured parameter landscape is more accessible from a pure practical point of view.
The parameter values and their updates are provided by the classical optimizer and are thus easy to record throughout the execution of a VQA while the state space landscape is not directly observable.
Despite being a difficult to handle mathematical object ($\tilde{F}$ is generally highly non-linear in the parameters), there have been many impressive breakthroughs determining some of its properties; first and foremost BPs, which were first discovered in \cite{McClean2018BarrenPlateausInQuantumNeuralNetworkTrainingLandscapes}.
In essence, it has been shown that for PQCs that cover the state space sufficiently uniformly, the expectation values of the gradients of $\tilde{F}$ in parameter direction are zero while the variance decreases exponentially with the number of qubits: $\text{Var}(\delta_{k} \tilde{F}) \in \mathcal{O}(b^{n})$, for some $b > 1$.
Therefore, starting from random initial parameters, exponentially many measurements have to be taken in order to record a slope:
the parameter landscape becomes practically flat almost everywhere which does not only effect gradient-based methods, but also gradient-free optimization routines \cite{Arrasmith2021EffectOfBarrenPlateausOnGradientFreeOptimization}.
A sophisticated parameter initialization method could, however, still circumvent this issue.

The initial discovery of BPs spawned an extensive and revealing search for other sources of BPs.
Soon, BPs where reported to also emerge from quantum device noise for PQCs growing linearly with the number of qubits \cite{Wang2021NoiseInducedBarrenPlateausInVariationalQuantumAlgorithms}, disconnecting the effect from random parameter initialization.
While this issue could be tackled with more developed quantum error correction methods, BPs where also shown to emerge from entanglement \cite{OrtizMaerrero2021EntanglementInducedBarrenPlateaus}, a property typically considered a feature rather than a drawback.
Another finding establishes the existence of BPs even for shallow circuits, provided that the cost function $\tilde{F}$ corresponds to the expectation value of global observables \cite{Cerezo2021CostFunctionDependentBarrenPlateausInShallowParametrizedQuantumCircuits}.
Additionally, BPs were found to contain many local minima \cite{Nemkov2025BarrenPlateausSwampedWithTraps}, further complicating maneuvering through the parameter landscape.

All these effects can be interpreted as a manifestation of the \emph{curse of dimensionality}:
The exponentially large dimension of the underlying Hilbert space makes it generally exponentially hard to extract any directional information.
However, every variant of BPs comes with its own specific assumptions on either PQC properties, circuit depth, or the implementation of the objective function.
In the following we aim at dropping more and more structural requirements and focus on the underlying optimization problem itself.

\section{\label{section:AsymptoticCircuits}Asymptotic Circuits}

As starting point of our investigation we will discuss the perspectives of the QAOA in an asymptotic regime.
This is, we consider infinite sequences of gates with potentially infinitesimally small parameters.
By including infinitesimal elements, the analysis of the asymptotic QAOA becomes drastically more structured since we now can make direct use of tools from differential geometry, i.e. Lie groups and their algebras. 

In the first part of this section we will clarify this geometry. Here the key points are: 
\begin{itemize}
    \item Accessible states form a differentiable manifold $\Omega$
    \item The target functional corresponds to a differentiable scalar field on $\Omega$
    \item The two families of QAOA gates correspond to two vector fields on $\Omega$
\end{itemize}

We will then turn our attention to the classification of minima and critical points of $F$.
Here our key findings for generic instances are: 
\begin{itemize}
    \item We have a one-to-one correspondence between the possible solution space $\mathbb{B}^N$ of the classical optimization problems and critical points of $F$.
    \item Within these critical points there is only one local minimum.
    This minimum is the solution of the underlying optimization problem. 
\end{itemize}

Thematically, our investigations and results belong to the field of dynamical Lie algebras (see e.g. \cite{Ge2022TheOptimizationLandscapeOfHybridQuantumClassicalAlgorithmsFromQuantumControlToNISQApplications} for a comprehensive review and \cite{Allcock2024OnTheDynamicalLieAlgebrasOfQuantumApproximateOptimizationAlgorithms} for concrete applications to QAOA).
However, we consider our analysis as more tangible than the general case discussed there.

\subsection{\label{subsection:TheGeometryOfAccessibleStates}The Geometry of Accessible States}

The very first question that arises when considering asymptotic gate sequences is:
Which states can be reached by the QAOA when starting from an initial state $\psi_{0}$?
This set of accessible states, from now on denoted by $\Omega$, will be the ground for the geometrical picture we want to outline in this section.  

Let $H$ be the target Hamiltonian that encodes an optimization problem and let $U_B(\beta)$ and $U_C(\gamma)$ with $C=H-\tr(H)\idty$ be our basic families of QAOA unitaries as described in the previous section.
The set $\Omega$ will be obtained from considering the set $G(B,C)$ of all circuits that could be asymptotically generated by the QAOA.  
In the following, we will drop the explicit dependence on $B$ and $C$ and write $G=G(B,C)$ whenever it is clear from the context.

\itheader{Circuits}
As described in the previous section, circuits of a fixed depth $p$ are parameterized by finite sequences of angles $\vec\gamma$ and $\vec\beta$.
For any $p$, these can be captured by the set
\begin{align}
   G_{p}= \left\{ 
    \prod_{q = 1}^{p} U_{C}(\gamma_{q}) U_{B}(\beta_{q}) \in\SU\big(2^{N}\big) : (\vec\gamma,\vec\beta) \in \mathbb{R}^{2p} \right\}.\label{equation:FixedDepthGroup}
\end{align} 
There are now some technicalities to respect when considering circuits of infinite depth.
A naive limit $p \rightarrow \infty$ of infinite angles sequences in \eqref{equation:FixedDepthGroup} will lead to infinite products of unitaries, which is not necessarily a well-defined object.

However, on the level of sets we observe that the $G_{p}$ form a monotone sequence; i.e.,\ we have $G_{p} \subseteq G_{p + 1}$.
Here a well-defined set-theoretic limit exists.
From this limit we obtain $G$ by taking the topological closure in $\mathcal L (\mathcal H)$, i.e., 
\begin{align}\label{equation:AsymptoticGroup}
   G=
    \overline{\Big(\lim_{p \to \infty} G_{p}\Big) } \subset \SU(2^N).
\end{align} 
Intuitively, $G$ contains all circuits of finite depth as well as all unitary transformations that can be approximated by circuits of finite depth up to arbitrary precision.
Here proximity is measured in the operator norm, meaning that two unitaries will be close to each other whenever their images are close to each other for any input state. 

\itheader{States}
Applying these circuits to a fixed initial state $\Psi_{0}$, i.e.\ taking the $G$-orbit around $\Psi_{0}$, will then give us the set $\Omega$ of accessible states
\begin{align}\label{equation:AccessibleStates}
    \Omega:= \left\{
    U \Psi_{0} U^{*}
    \defcolon
    U \in G
    \right\}\subseteq \states .
\end{align}
Corresponding to our intuition on $G$, those are all states that can be generated from $\Psi_{0}$ by a finite circuit such as all states that are approximately close to them. 
The central observation for the following analyses is that the set $G$ naturally carries the structure of a Lie group, which will also carry over to a corresponding structure on $\Omega$.

On one hand, a Lie group has the structure of a group, which in this case simply reflects that QAOA circuits have some of the basic properties a complete set of quantum circuits should have.
The group action, here given by multiplying the corresponding unitaries, reflects that concatenating two possible circuits gives again a valid circuit.
The existence of a neutral element, the identity operator, is obtained by setting all angles $\vec{\gamma}$ and $\vec{\beta}$ to zero.
This corresponds to an empty circuit, i.e.\ doing nothing.
The existence of an inverse\footnote{This is clear for finite circuits, for infinite circuits the proper limits have to be taken into account.}, here provided by taking the adjoint of a unitary, reflects the often-advertised property that quantum circuits are reversible and that the set of QAOA circuits is complete with respect to this property.  

On the other hand, the Lie group $G$ has the structure of a manifold.
This structure will be crucial for rigorously talking about optimization landscapes in what follows.
Even though the manifold structure of a Lie group is given in a quite abstract manner in the first place\footnote{On an abstract level, we can consider $G$ as a subgroup of $\mathrm{GL}(n,\C)$.
Here, Cartan's theorem on closed subgroups \cite{Lee2003IntroductionToSmoothManifolds} states that the embedding of $G$ into the smooth structure of $\mathrm{GL}(n,\C)$ will give us a consistent smooth structure on $G$ itself.},
it will carry over to a concrete structure on $\Omega$, giving us the playground for defining optimization landscapes and analyzing the behavior of optimization routines.  

For $\omega \in \Omega$ and $U \in G$, the map 
\begin{align}
  \pi_{\omega}(U) := U \omega U^{*} \label{equation:ActionOnStates}
\end{align}
describes an action of $G$ on $\Omega$ and defines the Lie group structure on $\Omega$.
Firstly, 
\begin{align}
     \pi_{\pi_{\omega}(V)}(U)= U V \omega V^{*} U^{*} = \pi_{\omega}(U V)\label{equation:GroupHomomorphism}
\end{align}
gives us a group homomorphism.
Secondly, as outlined later on, this map will also carry over the manifold structure from $G$ to $\Omega$.
As a manifold, $\Omega$ is isomorphic to the Lie group quotient $G / N$ via the map $\pi_{\Psi_{0}}$ where $N = \{U \in G \defcolon U \Psi_{0} U^{*} = \Psi_{0}\}$.
A direct consequence is that $\Omega$ is a closed manifold, i.e.\ is compact and has no boundary.

\begin{figure}[!ht]
    \includegraphics[width=0.99\linewidth]{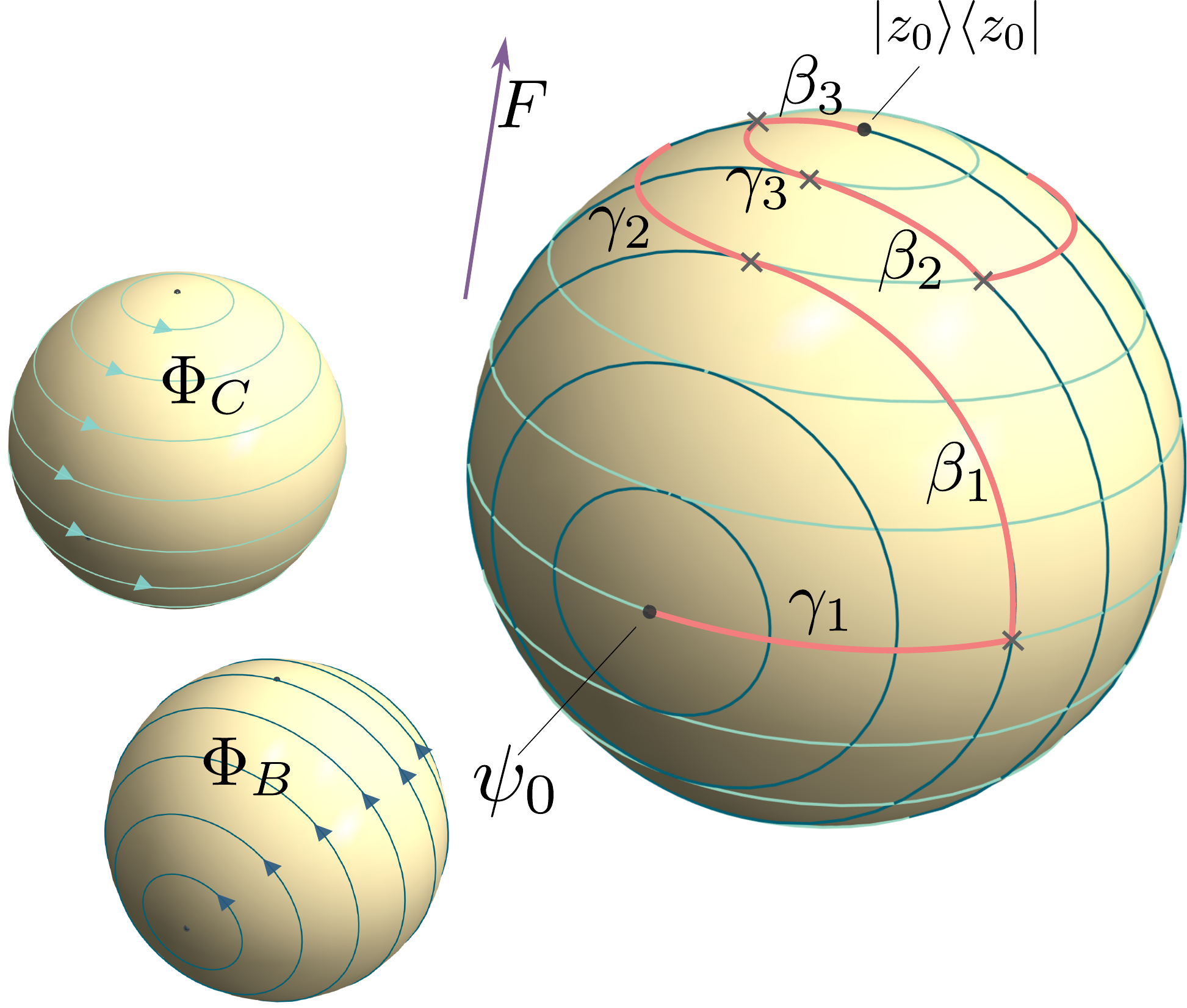}
    \caption{The geometry of the QAOA visualized on a qubit:
    Here, $\Omega$ is given by the surface of the Bloch sphere (all pure qubit states).
    Applying the basic gates $U_C$ or $U_B$ corresponds to movements along vector fields $\Phi_C$ or $\Phi_B$ that are oriented along lines of constant latitudes (left).
    Performing a QAOA sequence (coral colored line) corresponds to alternatingly move along these vector fields.}
    \label{figure:BlochSphere}
\end{figure}

\itheader{The Objective}
Our target functional $F(\omega) = \langle H \rangle_{\omega}$ can now be seen as a scalar field $F : \Omega \rightarrow \R$. 
Our initial minimization task is hence equivalent to finding the minimum of this field on $\Omega$. 
A direct consequence that can be drawn from the fact $\Omega$ is a closed manifold is that local and global minima only occur in the interior.
It is easy to see that $F$ is indeed also differentiable, which directly implies that local minima can be characterized considering its derivatives. 

Derivatives on a manifold are given in terms of tangent spaces. 
The tangent spaces on $G$ are obtained from its Lie algebra.
We will denote the Lie algebra that corresponds to $G$ by $\lie(B,C)$.
It always contains the Lie algebra generated by $i B$ and $i C$:
\begin{align}\label{equation:LieAlgebra}
    \lie(B,C) &\supseteq \operatorname{Lie}\left(\{i B, i C \}, [\defdot, \defdot]\right)\nonumber\\
    &=\mathrm{span}_{\mathbb{R}}\hspace*{-2pt}\left\{i B, i C, [B, C], i [B, [B, C]], \ldots\right\}.
\end{align}
The question whether the inclusion in \eqref{equation:LieAlgebra} is actually an equality is subtle, but not too important for our analysis.
We will drop the explicit dependence on $B$ and $C$ in the notation and write $\lie = \lie(B,C)$ whenever it is clear from the context.
Our choice of taking $B$ and $C$ to be trace-less implies that $\lie \subseteq \su(\hil)$.
The map $\pi_{\omega}$ now transfers the tangent spaces of $G$ to tangent spaces on $\Omega$.
More precisely, for a point $\omega \in \Omega$, the map $\pi_{\omega}$ gives us a push forward from $\lie$ to the tangent space $T_{\omega} \Omega$ at $\omega$.

Resulting from this we can introduce the directional derivatives of a scalar field w.r.t.\ $A \in \lie$ at a point $\omega \in \Omega$ by
\begin{align}
    \nabla_{A}^{(n)} F(\omega) &:= \partial_{t}^{(n)} F\big(\pi_{\omega}(e^{t A})\big)\vert_{t = 0}\nonumber\\
    &\phantom{:}= \partial_{t}^{(n)} F\big(e^{t A}\, \omega\, e^{- t A}\big)\vert_{t = 0}.\label{equation:Derivative}
\end{align}
These derivatives will be used to characterize critical points, minima, and maxima in the subsequent sections.
At each point $\omega \in \Omega$, there will be several $A \in \lie$ that do not correspond to actual directions on $\Omega$ due to its quotient structure. 
Namely, for $A\in \lie$ with
\begin{align}
    [\omega,A] = 0,\label{equation:VanishingCommutator}
\end{align}
we have that $\pi_{\omega}(e^{t A}) \equiv \omega$ for all $t \in \R$.
Thus, the directional derivatives w.r.t.\ $A$ would vanish for every considered scalar field and is therefore meaningless.

\itheader{QAOA Sequences}
To complete this section we will have a look at the geometric role that the families of basic QAOA gates $U_{B}(\beta)$ and $U_{C}(\gamma)$ play. 
From the manifold perspective those correspond to vector fields $\Phi_{B}$ and $\Phi_{C}$ on $\Omega$. 
Explicitly, we can think of vectors
\begin{align}
    &i [\omega, B] \text{ and } i [\omega, C]\label{equation:Directions}
\end{align}
that are assigned to each point $\omega \in \Omega$.
The structure of these vectors can be grasped in a simple manner by considering that a point $\omega$ is transported by a one parameter family $U_{t} = e^{t A}$ via the map \eqref{equation:ActionOnStates} along a continuous curve  $\omega_{t} = \pi_{\omega}(e^{t A})$ on $\Omega$.
The directional derivative along this curve is then simply given by 
\begin{align}
    \partial_{t} \omega_{t}\vert_{t = 0} = \partial_{t} e^{t A}\, \omega\, e^{-t A}\vert_{t = 0} = - [\omega, A].\label{equation:EvaluatedDerivative}
\end{align}
Performing a QAOA instance of depth $p$ with some parameters $\beta_{1}, \beta_{2}, \dots, \beta_{p}$ and $\gamma_{1}, \gamma_{2}, \dots, \gamma_{p}$ hence can be understood in a nice geometrical picture: start from $\Psi_{0}$ and follow (the flow of) the vector field $\Phi_{C}$ for a distance given by parameter $\gamma_{1}$, stop and change direction to follow the (flow of the) vector field $\Phi_{B}$ for a distance given by $\beta_{1}$, and so on.
Here the circuit depth $p$ determines the number of switching from one flow to the other.
Situations in which shallow circuits suffice to reach a global minimum can therefore be associated to situations in which the vector fields have a simple structure (see, e.g., \autoref{figure:BlochSphere}).
In converse, we expect that the, potentially exponential, hardness of a particular classical problem translates into a corresponding complexity of the vector field structure.
Here the dimension of the corresponding Lie algebra would be a natural number for quantifying this. 

\subsection{\label{subsection:TheLieAlgebraOfTheEncodedProblem}The Lie Algebra of the Encoded Problem}

For the characterization of optimization landscapes, we will first focus on the underlying manifold.
Here, the central question is: which Lie group corresponds to a classical optimization problem encoded in a given Hamiltonian $H$? 
As it turns out, all properties of a Lie group which are relevant for our investigations are already encoded in its Lie algebra so that we focus on these objects instead.
Here, the above question translates to determining properties of the algebra $\lie$, which is generated by the given operators $B$ and $C$. 

The precise form of $\lie$ might, however, be highly problem-specific and will reflect the hardness of the underlying classical optimization problem becomes, in general, extremely hard to compute for large systems.
At this point, we will leave the detailed investigation of this connection for future work and instead provide statements on $\lie$ for a generic case. 

Notably, it was prominently shown \cite{Lloyd1995AlmostAnyQuantumLogicGateIsUniversal,Deutsch1995UniversalityInQuantumComputation} that almost every set of quantum logic gates with an action on more than one qubit can be used to generate any quantum circuit.  
A corresponding result for Lie algebras states that the algebra generated by randomly drawn traceless matrices almost certainly turns out to be the full algebra $\mathfrak{su}(2^{N})$.
This clearly suggests to ask for a similar behavior in the special case of the QAOA circuits. 

Formally, we will say that the generators $B$ and $C$ of QAOA-gates are \emph{universal} if $G(B,C) = \SU(\hil)$. This follows automatically if the Lie algebra generated by $i B$ and $i C$ is $\su(\hil)$.
Clearly, this question has been investigated before \cite{Morales2020OnTheUniversalityOfTheQuantumApproximateOptimizationAlgorithm} with the result that universality of QAOA circuits was proven for a wide range of objective Hamiltonians $H$ and their corresponding generator $C$.
The following theorem will contribute to this by a simple sufficient criterion for universality that can easily be checked by merely considering the possible values of the classical objective function $f$. 
In the field of optimal control theory the notion of `controllability' is the counterpart to the universality of a gate set in our case. 
Based on a convenient controllability criterion from \cite{Altafini2002ControllabilityOfQuantumMechanicalSystemsByRootSpaceDecompositionOfSuN} we get

\begin{theorem}\label{theorem:UniversalGates}
    $B$ and $C$ together form universal generators of QAOA-gates if the underlying classical optimization problem given by a target function $f$ fulfills the conditions 
    \begin{enumerate}
        \item[(a)] non-degenerate values: $$f(z) = f(z') \implies z=z'$$
        \item[(b)] non-degenerate resonance: $$f(z) - f(z') = f(t) -f(t') \implies (z,z') = (t,t') $$
        \text{ if } $z\neq z' \text{ and } t\neq t'$,
    \end{enumerate} 
    
    In particular, the set of optimization problems $f$ for which $\lie=\su(2^{N})$ is open and dense (and hence the complement is a null set).
\end{theorem}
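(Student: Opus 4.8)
The plan is to derive \autoref{theorem:UniversalGates} from the root-space controllability criterion of [\onlinecite{Altafini2002ControllabilityOfQuantumMechanicalSystemsByRootSpaceDecompositionOfSuN}]. Phrased in the language of this paper, that criterion reads: if the ``drift'' operator $C$ is diagonal in an orthonormal basis with (i) a non-degenerate spectrum and (ii) non-degenerate Bohr frequencies (all nonzero differences of eigenvalues pairwise distinct), and if the ``control'' operator $B$ has a \emph{connected} interaction graph --- the graph on the eigenbasis of $C$ carrying an edge between $\ket{z}$ and $\ket{z'}$ precisely when $\bra{z}B\ket{z'}\neq 0$ --- then $\operatorname{Lie}(\{iB,iC\})=\su(2^{N})$. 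Accordingly the proof breaks into three tasks: (1) translating conditions (a) and (b), together with the structure of the mixer $B$, into Altafini's spectral-and-graph hypotheses; (2) upgrading the resulting Lie-algebra identity to the group statement $G(B,C)=\SU(\hil)$; (3) establishing the genericity claim.

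For task (1): since $C$ is $H$ shifted by an additive multiple of $\idty$, it is diagonal in the computational basis with eigenvalues $f(z)-c$ for a single constant $c$, so its eigenvalue differences are exactly the classical gaps $f(z)-f(z')$. Hence condition (a) is precisely non-degeneracy of the spectrum of $C$, and condition (b) is precisely non-degeneracy of its Bohr frequencies. For the mixer $B=-\sum_{n}\sigma_{x}^{(n)}$ one reads off $\bra{z}B\ket{z'}=-1$ when $z$ and $z'$ differ in exactly one bit and $0$ otherwise, so the interaction graph is the $N$-dimensional hypercube graph, which is connected. Altafini's criterion then yields $\operatorname{Lie}(\{iB,iC\})=\su(2^{N})$; combined with $\lie\subseteq\su(\hil)$ (guaranteed by tracelessness of $B$ and $C$) this forces equality in \eqref{equation:LieAlgebra} and $\lie=\su(2^{N})$. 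For task (2): the subgroup of $\SU(2^{N})$ generated by the one-parameter subgroups generated by $B$ and $C$ is the connected Lie subgroup with Lie algebra $\lie=\su(2^{N})$, i.e.\ all of $\SU(2^{N})$; since $\SU(2^{N})$ is compact this subgroup is already closed, so the closure in \eqref{equation:AsymptoticGroup} changes nothing and $G(B,C)=\SU(\hil)$. The only thing to verify here is that restricting \eqref{equation:FixedDepthGroup} to \emph{alternating} products $U_{C}(\gamma_{i})U_{B}(\beta_{i})$ loses nothing, which is immediate from setting angles to zero.

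For task (3): identify a target function with the vector $(f(z))_{z\in\bits}\in\R^{2^{N}}$. Condition (a) fails exactly on the finite union of hyperplanes $\{f(z)=f(z')\}$ over pairs $z\neq z'$, and condition (b) fails exactly on the finite union of the sets $\{f(z)-f(z')-f(t)+f(t')=0\}$ over the admissible quadruples; a short check shows the coefficient vector $e_{z}-e_{z'}-e_{t}+e_{t'}$ is nonzero for every admissible quadruple (including those that reduce to a pair), so each is again a proper affine hyperplane. A finite union of proper affine hyperplanes is closed, nowhere dense, and Lebesgue-null, so the set of $f$ satisfying (a) and (b) is open, dense, and co-null. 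Finally the set $\{f\defcolon\lie(B,C)=\su(2^{N})\}$ contains it --- hence is likewise dense with null complement --- and is itself open, because $\dim\lie(B,C)$ equals the rank of a fixed finite family of iterated commutators whose matrix entries depend polynomially on $f$, and matrix rank is lower semicontinuous.

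I expect the only point demanding genuine care to be the translation in task (1): confirming that ``non-degenerate resonance'' in (b) is exactly Altafini's non-degenerate-transition hypothesis --- in particular the way it interacts with (a) on quadruples of the form $(z,z',z',z)$, where the sign-flipped gap coincidence is ruled out only via non-degeneracy of the spectrum --- and that passing from $H$ to the traceless $C$ preserves both spectral conditions. The hypercube connectivity of the mixer's interaction graph, the algebra-to-group step via compactness of $\SU(2^{N})$, and the hyperplane-arrangement argument for genericity are all routine.
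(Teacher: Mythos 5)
Your proposal is correct and takes essentially the same route as the paper's own proof: both invoke Altafini's sufficient criterion, identifying conditions (a) and (b) with strong regularity of $C$ and observing that the interaction graph of the mixer $B$ in the computational basis (the hypercube graph) is connected. Your write-up additionally spells out the genericity claim via the hyperplane-arrangement and rank-semicontinuity argument and the algebra-to-group step via compactness of $\SU(2^{N})$, both of which the paper's appendix leaves implicit; these additions are sound.
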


\begin{proof}
Given Hermitian operators $B$ and $C$ on a (finite dimensional) Hilbert space $\hil$, under what conditions is the generated Lie algebra $\mathrm{Lie}(i B,i C)$ equal to $\su(\hil)$? This question is particularly important in the field of {\it control theory} and has, for example, been looked at in  \cite{Altafini2002ControllabilityOfQuantumMechanicalSystemsByRootSpaceDecompositionOfSuN,Zeier2011SymmetryPrinciplesInQuantumSystemsTheory,Zeier2015OnSquaresOfRepresentationOnCompactLieAlgebras,Zimboras2015SymmetryCriteriaForQuantumSimulabilityOfEffectiveInteractions}. 
We will use the sufficient condition in \cite[Theorem 2]{Altafini2002ControllabilityOfQuantumMechanicalSystemsByRootSpaceDecompositionOfSuN} stating that if $C$ is strongly regular (which is equivalent to our assumptions (a) and (b)) and if the graph $\mathcal{G}$ of $B$ is connected (see below), then $\lie(B,C) =\mathfrak{su}(\hil)$, or equivalently $\lie(B,C) =\su(\hil)$.

The graph $\mathcal{G}$ of an operator $B$ is defined with respect to a basis $\{\ket{z} \in S\}$ as follows:
The vertices of $\mathcal{G}$ are the different basis labels $s\in S$ and there is an oriented edge joining $z$ and $z'$ if and only if $\braket{z | B | z'} \neq 0$.
That is, the matrix representation of $B$ in the same basis $\{\ket{z} \in S\}$ is, up to normalization of its entries, the adjacency matrix of the graph $\mathcal{G}$.
However, since $B$, as a matrix in the computational basis, is irreducible \cite{Farhi2014AQuantumApproximateOptimizationAlgorithm}, we readily obtain that its graph is connected.
\end{proof}

For the case that $\lie = \su(2^N)$, it is indeed always possible to solve the problem in finitely many steps, i.e., there exist $\beta_j,\gamma_j$, $j=1,\ldots,k$ such that $U_B(\beta_1)U_C(\gamma_1)U_B(\beta_2)\cdots U_C(\gamma_k)$ maps the initial state to a ground state of $H$ (see \cite[Thm. 3.4]{Agrachev2004ControlTheoryFromTheGeometricViewpoint}).

Thus, a randomly chosen optimization target $f$ will fulfill the criteria from \autoref{theorem:UniversalGates} almost certainly.
We can therefore draw the conclusion that universality is indeed a generic property of the QAOA.
Generic instances of Knapsack, where the values and weights are arbitrary real numbers, or Traveling Salesman, where the distances between cities are arbitrary real numbers, (both encoded with soft constraints) will fall in this category.
Furthermore, generic QUBOs are one of the classes for which universality was shown in \cite{Morales2020OnTheUniversalityOfTheQuantumApproximateOptimizationAlgorithm}. 

Problem classes where values of $f$ are typically given by integers, like MAXCUT or MAX-$k$-SAT, do however not necessarily fulfill the conditions of \autoref{theorem:UniversalGates} and could therefore lead to an optimization on smaller sets.
If universality is a desired property, it can however always be restored by adding small perturbation.
A concrete strategy would be to consider weighted MAXCUT or weighted MAX-$k$-SAT, with close to integer weights.
The universality in these cases is ensured by the denseness statement in \autoref{theorem:UniversalGates}.
Active perturbation might not even necessary in practice, as basically any numerical inaccuracies stemming from finite machine-precision will lift degenerate resonances almost surely.
Therefore, from a practical point of view, any instance with non-degenerate values admits universality.

In general, it is however far from obvious whether a lack of universality is an obstacle or an actual feature.
On one hand, we have that the reachable set $\Omega$ becomes $\purestates$, the whole set of pure states when $\lie=\su(2^{N})$ and thus, the largest possible search space.
A smaller algebra would in turn lead to a smaller search space, which may make finding an optimum an easier task.
However, on the other hand, a bigger algebra in principle also gives more paths that could be taken by the QAOA, which may in contrast enhance our chances to find short paths, which is one of the QAOA's advertised features. 
Another feature of universality that we will explore in the following is that a sufficiently big algebra will prohibit the existence of local minima in the optimization landscapes.
Here the intuition would be that a larger algebra provides us with more directions to move towards in order to escape from local extrema.

\subsection{\label{subsection:ClassificationOfCriticalPoints}Classification of Critical Points}

With the basic geometry set up in the previous subsections we can now turn our attention to our intended task of minimizing the target functional $F$.
As mentioned above, it is clear from the underlying manifold structure that the global minimum of $F$ on $\Omega$, the point we want to find, does not occur on any boundary.
Hence, we will now follow the usual procedures for discussing the extreme points of a differentiable function.
We start by considering critical points:

We call $\omega_{0} \in \Omega$ a \emph{critical point} of $F$ provided that $\nabla_{A} F(\omega_{0})$ vanishes for all $A \in \lie$.
It is clear that any minimum has to be within the set of critical points.
Furthermore, the existence of critical points also plays a major role in deep-circuit QAOA based on local optimization strategies.
Here critical points typically appear as local attractors, which may impose major hurdles for a good overall performance. 
In the universal case, the critical points can be identified as the eigenstates of $H$.

\begin{proposition}\label{proposition:EigenstatesAreCriticalPoints}
Let $B$ and $C$ be universal generators of QAOA-gates.
Then the critical points of $F$ are precisely the eigenstates of $H$.
\end{proposition}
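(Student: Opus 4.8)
The plan is to exploit the linearity of $F$ together with the explicit form of the directional derivative \eqref{equation:EvaluatedDerivative}. First I would write out the first-order condition: $\omega_0$ is a critical point iff $\nabla_A F(\omega_0) = \tr\big(\,-[\omega_0,A]\,H\,\big) = 0$ for all $A \in \lie$. Using cyclicity of the trace, $\tr\big([\omega_0,A]H\big) = \tr\big(A[H,\omega_0]\big)$, so the condition becomes $\tr\big(A[H,\omega_0]\big) = 0$ for all $A \in \lie$. In the universal case $\lie = \su(\hil)$, so this says $[H,\omega_0]$ is trace-orthogonal to every traceless skew-Hermitian matrix. Since $[H,\omega_0]$ is itself skew-Hermitian (as $H$ is Hermitian and $\omega_0$ is Hermitian) and traceless, it must therefore be zero; i.e.\ $[H,\omega_0] = 0$.

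Next I would translate $[H,\omega_0]=0$ into the eigenstate statement. Since $\omega_0 \in \Omega$ is a pure state (the $G$-orbit of the pure state $\Psi_0$ consists of pure states), write $\omega_0 = \ketbra{\psi}{\psi}$. Commuting with $H$ means $H$ leaves $\operatorname{span}\{\ket\psi\}$ invariant, hence $\ket\psi$ is an eigenvector of $H$, i.e.\ $\omega_0$ is an eigenstate of $H$. Conversely, if $\omega_0$ is an eigenstate then $[H,\omega_0]=0$, so the first-order condition holds for all $A$, and $\omega_0$ is critical. This gives both inclusions.

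One point that needs a little care: I should make sure that "critical point" is being tested against the full tangent space $T_{\omega_0}\Omega$, not a proper subspace. The excerpt already notes that directions $A$ with $[\omega_0,A]=0$ contribute nothing to any directional derivative, so nothing is lost there; and in the universal case the push-forward of $\lie = \su(\hil)$ onto $T_{\omega_0}\Omega$ is surjective, so quantifying over all $A \in \lie$ genuinely exhausts the tangent directions. The main (mild) obstacle is thus purely bookkeeping — being explicit that $[H,\omega_0]$ is traceless and skew-Hermitian so that orthogonality to all of $\su(\hil)$ forces it to vanish, and that pure-state-ness of $\omega_0$ upgrades $[H,\omega_0]=0$ to genuine eigenvector membership rather than merely block-diagonality. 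No hard analysis is involved; the whole argument is a one-line trace manipulation plus the nondegeneracy of the trace pairing on $\su(\hil)$.
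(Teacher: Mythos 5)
Your proposal is correct and follows essentially the same route as the paper: compute $\nabla_A F(\omega_0)=\tr(A[\omega_0,H])$, use universality to reduce the criticality condition to $[\omega_0,H]=0$, and identify this with the eigenstate condition for a pure state. Your appeal to nondegeneracy of the trace pairing on $\su(\hil)$ is exactly the paper's explicit choice $A=[\omega_0,H]^{*}$ in disguise, so the two arguments coincide.
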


\begin{proof}
For an arbitrary $\omega\in \Omega$ and $A \in \lie(B, C)$, it holds that
\begin{eqnarray}
    \nabla_{A} F(\omega) 
    &=& \partial_{t} \tr\left(
    H e^{t A}\, \omega\, e^{- t A}
    \right)\vert_{t = 0}\nonumber\\
    &=& \tr(H A\, \omega) - \tr(H \omega A)  \nonumber\\
    &=& \tr(A\, \omega H) - \tr(A H \omega) \nonumber\\
    &=& \tr(A\, [\omega, H]).\label{equation:CriticalDerivative}
\end{eqnarray}
Now $\omega$ is an eigenstate of $H$ if and only if the commutator $[\omega, H]$ vanishes.
Thus, all eigenstates of $H$ that lie in $\Omega$ are critical points.
However, since we assume $B$ and $C$ to be universal, $\Omega = \purestates$ holds.
Conversely, let $\tr(A [\omega, H]) = \tr(\omega [A, H])$ vanish for all $A \in \lie(B, C)$.
Due to the assumed universality of $B$ and $C$, we can choose $A = [\omega, H]^{*}$ and conclude that $\norm{[\omega, H]} = 0$.
Therefore, $\omega$ is already an eigenstate of $H$.
\end{proof}

If the classical target function $f$ fulfills the conditions of \autoref{theorem:UniversalGates}, all eigenstates of $H$ are non-degenerate and therefore coincide with the computational basis states. Thus, we conclude

\begin{corollary}\label{corollary:ComputationalBasisStatesAreCriticalPoints}
Let $f$ fulfill the conditions of \autoref{theorem:UniversalGates}.
Then the critical points of $F$ are precisely the computational basis states.
\end{corollary}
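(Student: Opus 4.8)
The plan is to obtain the corollary as a direct specialization of \autoref{proposition:EigenstatesAreCriticalPoints}, using only the bridge facts already established about severing target functions: by definition, $f$ severing means $f$ satisfies the conditions (a) and (b) of \autoref{theorem:UniversalGates}, hence $\lie = \su(2^N)$, hence $B$ and $C$ are universal generators. So the hypothesis of the proposition is met and its conclusion applies: the critical points of $F$ are exactly the eigenstates of $H$. The remaining work is purely about the spectral structure of $H$.

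First I would recall that $H = \sum_{z \in \bits} f(z)\ketbra{z}{z}$ is already diagonal in the computational basis, so each $\ket{z}$ is an eigenstate with eigenvalue $f(z)$, and the eigenspaces of $H$ are spanned by the basis vectors grouped according to the level sets of $f$. Condition (a) of \autoref{theorem:UniversalGates} — which is part of what "severing" packages — says $f(z) = f(z') \implies z = z'$, i.e. $f$ is injective, so every eigenvalue of $H$ is non-degenerate and every eigenspace is one-dimensional. Consequently the pure eigenstates of $H$ (the rank-one projectors commuting with $H$) are precisely the $\ketbra{z}{z}$ for $z \in \bits$: a rank-one projector $\omega$ satisfies $[\omega,H]=0$ iff its range lies in an eigenspace of $H$, and a one-dimensional eigenspace forces $\omega = \ketbra{z}{z}$. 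Combining this with the proposition gives that the critical points of $F$ are exactly the computational basis states, which is the claim.

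The only mild subtlety — and the one place I would be careful in the writeup — is keeping straight which objects are pure states: $\Omega$ in the universal case is $\purestates$, so "critical point" already refers to a pure state (rank-one $\omega$), and the argument in the previous paragraph is exactly the observation that a rank-one commutant of a non-degenerate diagonal Hamiltonian must be one of the basis projectors. There is no hard obstacle here; the content is entirely in \autoref{theorem:UniversalGates} and \autoref{proposition:EigenstatesAreCriticalPoints}, and the corollary is a short unwinding of definitions (severing $\Rightarrow$ universal $\Rightarrow$ critical points are eigenstates; severing $\Rightarrow$ non-degenerate spectrum $\Rightarrow$ eigenstates are basis states). I would write it as a two- or three-sentence proof rather than anything elaborate.

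\begin{proof}[Proof sketch]
By definition, a severing $f$ satisfies conditions (a) and (b) of \autoref{theorem:UniversalGates}, so $\lie = \su(2^N)$ and $B$, $C$ are universal generators; hence \autoref{proposition:EigenstatesAreCriticalPoints} applies and the critical points of $F$ are exactly the eigenstates of $H$. Since $H = \sum_{z \in \bits} f(z)\ketbra{z}{z}$ and condition (a) makes $f$ injective, the spectrum of $H$ is non-degenerate and each eigenspace is spanned by a single $\ket{z}$. A pure state $\omega$ is an eigenstate of $H$ iff $[\omega, H] = 0$, i.e.\ iff its one-dimensional range lies in an eigenspace of $H$; non-degeneracy then forces $\omega = \ketbra{z}{z}$ for some $z \in \bits$. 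Thus the critical points of $F$ are precisely the computational basis states.
\end{proof}
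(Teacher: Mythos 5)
Your proposal is correct and follows exactly the paper's route: the paper also derives the corollary by noting that severing implies universality (via \autoref{theorem:UniversalGates}), invoking \autoref{proposition:EigenstatesAreCriticalPoints}, and observing that non-degeneracy of the values of $f$ forces the eigenstates of $H$ to be the computational basis states. You simply spell out the rank-one-commutant step that the paper leaves implicit; there is no substantive difference.
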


The last proposition tells us, that the state we are looking for is a critical point of the functional.
Moreover, there are no irregularities in terms of `hidden' minima for some special functional.

\subsection{\label{subsection:UniquenessOfMinimaAndMaxima}Uniqueness of Minima and Maxima}

Next, we want to go one step further and investigate properties of the second derivatives.
These will allow us to distinguish between local minima, maxima, and saddle points.
In fact, we have the following necessary condition for local extrema.
\begin{corollary}\label{corollary:ExtremumCharacterization}
If $\omega_{0} \in \Omega$ is a local minimizer of $F$ then $\omega_{0}$ is a critical point of $F$ and for all $A \in \lie$, it holds that $\nabla_{A}^{(2)} F(\omega_{0}) \geq 0$.
\end{corollary}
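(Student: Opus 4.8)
The plan is to treat this as the standard second-order necessary condition for a local minimum, transported to the manifold $\Omega$ via the one-parameter curves $t \mapsto \pi_{\omega_0}(e^{tA})$ that were already introduced in \eqref{equation:Derivative}. The first claim—that $\omega_0$ is a critical point—is essentially immediate: if $\omega_0$ is a local minimizer of $F$ on $\Omega$, then for every fixed $A \in \lie$ the real function $g_A(t) := F(\pi_{\omega_0}(e^{tA}))$ has a local minimum at $t=0$, since the curve stays in $\Omega$ and passes through $\omega_0$ at $t=0$. Hence $g_A'(0) = \nabla_A F(\omega_0) = 0$. Since this holds for all $A \in \lie$, $\omega_0$ is a critical point by definition.

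For the second-order part, I would again fix an arbitrary $A \in \lie$ and look at the same scalar function $g_A(t) = F(e^{tA}\,\omega_0\,e^{-tA})$. Because $F$ is differentiable (indeed, by linearity of $F$ on $\states$ and smoothness of $t \mapsto e^{tA}\omega_0 e^{-tA}$, it is real-analytic in $t$), $g_A$ is a smooth function of the single real variable $t$ with a local minimum at $t=0$. The elementary calculus fact that a twice-differentiable function with a local minimum at an interior point has nonnegative second derivative there gives $g_A''(0) \geq 0$, which is exactly $\nabla_A^{(2)} F(\omega_0) \geq 0$. One should remark that the case $[\omega_0, A] = 0$ is harmless: then $g_A$ is constant and $\nabla_A^{(2)} F(\omega_0) = 0 \geq 0$ trivially, consistent with the discussion around \eqref{equation:VanishingCommutator}.

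The only genuine subtlety—and the step I would be most careful about—is the claim that these one-parameter curves suffice to detect a \emph{local} minimizer: a priori one needs that a neighborhood of $\omega_0$ in $\Omega$ is covered by (or at least probed in every tangent direction by) such curves. This is guaranteed by the manifold structure established in Subsection \ref{subsection:TheGeometryOfAccessibleStates}: $\pi_{\omega_0}$ pushes $\lie$ onto the full tangent space $T_{\omega_0}\Omega$, and the exponential-type map $A \mapsto \pi_{\omega_0}(e^{A})$ is a local parameterization of $\Omega$ near $\omega_0$. Thus if $F$ were to decrease in some direction of $T_{\omega_0}\Omega$ it would decrease along one of these curves, contradicting the local minimum property; so testing along all $A \in \lie$ is genuinely enough. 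Once this is in place, nothing deeper than single-variable calculus is needed, so I would keep the writeup short and lean on \eqref{equation:EvaluatedDerivative} and the computation already performed in the proof of Proposition \ref{proposition:EigenstatesAreCriticalPoints} for the explicit form of the first derivative.
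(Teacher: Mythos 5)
Your argument is correct and is exactly the standard restriction-to-curves argument that the paper implicitly relies on (the corollary is stated there without proof, as an immediate consequence of the definition of $\nabla_A^{(n)}F$ in terms of the curves $t \mapsto \pi_{\omega_0}(e^{tA})$). One small remark: the ``subtlety'' you flag in your last paragraph is not actually needed for this direction of the implication --- a local minimizer of $F$ on $\Omega$ is automatically a local minimizer of the restriction of $F$ to any curve in $\Omega$ passing through $\omega_{0}$, so no claim that the curves cover a neighborhood or span $T_{\omega_{0}}\Omega$ is required; that issue would only arise for a converse (sufficiency) statement.
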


We call any critical point of $F$ with indefinite second derivatives  \emph{saddle point}.
In the universal case, the critical points of $F$ are precisely given by the eigenstates of $H$ due to \autoref{proposition:EigenstatesAreCriticalPoints}.
This allows us to classify the local minima even further.
Namely, local minima are already global ones.
\begin{proposition}\label{proposition:LocalMinimumIsGlobal}
    Let $B$ and $C$ be universal generators of QAOA-gates.
    Then each local minimum of $F$ is already its global minimum and corresponds to a ground state of $H$.
\end{proposition}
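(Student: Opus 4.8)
The plan is to reduce the statement to the classical fact that the Rayleigh quotient on projective space has no local minimum other than its global one, translated into the present Lie-theoretic formalism. Concretely, I would show that every local minimizer of $F$ is an eigenstate of $H$ whose eigenvalue is the smallest one; the rest is then immediate from the Rayleigh--Ritz principle.

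First I would use the structure already established. If $\omega_{0} \in \Omega$ is a local minimizer of $F$, then by \autoref{corollary:ExtremumCharacterization} it is a critical point, and by \autoref{proposition:EigenstatesAreCriticalPoints}---invoking universality, so that $\Omega = \purestates$ and $\lie = \su(\hil)$---it is an eigenstate, i.e.\ $\omega_{0} = \ketbra{\psi}{\psi}$ with $H\ket{\psi} = \lambda \ket{\psi}$ for some $\lambda \in \operatorname{spec}(H)$. The task is then to rule out $\lambda > \lambda_{\min} := \min \operatorname{spec}(H)$.

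Suppose toward a contradiction that $\lambda > \lambda_{\min}$, and pick a unit eigenvector $\ket{\phi}$ with $H\ket{\phi} = \mu \ket{\phi}$ and $\mu < \lambda$; since $\mu \neq \lambda$ we automatically have $\langle\phi|\psi\rangle = 0$. Consider the traceless anti-Hermitian operator $A = i\big(\ketbra{\psi}{\phi} + \ketbra{\phi}{\psi}\big)$, which lies in $\su(\hil) = \lie$---this is exactly where the universality hypothesis is used, as it guarantees this escape direction is admissible. On the plane spanned by $\ket{\psi}$ and $\ket{\phi}$ the flow acts as $e^{tA}\ket{\psi} = \cos t\,\ket{\psi} + i\sin t\,\ket{\phi}$, so along the curve $\omega(t) = \pi_{\omega_{0}}(e^{tA})$ all cross terms $\langle\psi|H|\phi\rangle$ vanish (again because $\ket{\psi}$ and $\ket{\phi}$ are $H$-eigenvectors with distinct eigenvalues), and one obtains $F(\omega(t)) = \lambda + \sin^{2} t\,(\mu - \lambda)$. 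Hence $\nabla_{A}^{(2)} F(\omega_{0}) = 2(\mu - \lambda) < 0$, contradicting \autoref{corollary:ExtremumCharacterization}; equivalently, $F$ strictly decreases along the curve for small $t > 0$, contradicting minimality directly. Therefore $\lambda = \lambda_{\min}$, i.e.\ $\omega_{0}$ is a ground state of $H$.

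Finally, by the Rayleigh--Ritz principle $F(\omega_{0}) = \lambda_{\min} = \min_{\omega \in \Omega} F(\omega)$, so the local minimum is in fact global. I do not anticipate a genuine obstacle: the argument is short once \autoref{corollary:ExtremumCharacterization} and \autoref{proposition:EigenstatesAreCriticalPoints} are available. The only point demanding care is verifying that the escape direction $A$ genuinely belongs to $\lie$, which is precisely where $\lie = \su(\hil)$ enters---and which foreshadows why, for merely deep but finite circuits, such directions may be unavailable and real local traps can appear.
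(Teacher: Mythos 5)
Your proof is correct and follows essentially the same route as the paper: reduce to an eigenstate via \autoref{proposition:EigenstatesAreCriticalPoints}, then contradict the second-order condition of \autoref{corollary:ExtremumCharacterization} using a direction in $\su(\hil)=\lie$ that rotates $\ket{\psi}$ toward a lower-energy eigenvector. Your explicit choice $A=i\left(\ketbra{\psi}{\phi}+\ketbra{\phi}{\psi}\right)$ is simply a concrete instantiation of the operator the paper leaves implicit when it says one can choose $A$ so that $A\ketbra{\psi}{\psi}A^{*}$ is an unnormalized ground state.
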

\begin{proof}
By the Rayleigh-Ritz inequality, all global minima of $F$ correspond to ground states of $H$.
Let $\lambda_{0}$ denote its ground state energy.
Now, let $\ketbra{\psi}{\psi} \in \Omega$ be a local minimizer of $F$.
By \autoref{proposition:EigenstatesAreCriticalPoints} and \autoref{corollary:ExtremumCharacterization}, $\ketbra{\psi}{\psi}$ is an eigenstate of $H$; let $\lambda$ denote the corresponding eigenvalue.
Furthermore, for all $A \in \lie(B, C)$, it holds that
\begin{align}
    0 \leq \nabla_{A}^{(2)} F(\ketbra{\psi}{\psi})
    &= \partial_{t}^{2} \tr\left(e^{t A} \ketbra{\psi}{\psi} e^{- t A} H\right)\vert_{t = 0}\nonumber \\
    &= 2 \tr\big(\ketbra{\psi}{\psi} (A^{2}H - A H A)\big)\nonumber \\
    &= 2 \tr\big(A \ketbra{\psi}{\psi} A^{\phantom{*}} (\lambda \idty - H)\big)\nonumber \\
    &= 2 \tr\big(A \ketbra{\psi}{\psi} A^{*} (H - \lambda \idty)\big)\nonumber \\
    &= 2 \big(\hspace*{-2pt}\braket{A \psi | H | A \psi} - \lambda \braket{A \psi | A \psi}\hspace*{-2pt}\big).\label{equation:SecondDerivative}
\end{align}
If $\ketbra{\psi}{\psi}$ would not be a ground state of $H$, one could choose $A \in \lie(B, C) = \mathfrak{su}(2^{N})$ so that $A \ketbra{\psi}{\psi} A^{*}$ is an (unnormalized) ground state of $H$, implying that $\lambda_{0} \geq \lambda$ which clearly rises a contradiction.
\end{proof}

\begin{figure*}[!htp]
    \centering
    \includegraphics[width=0.99\linewidth]{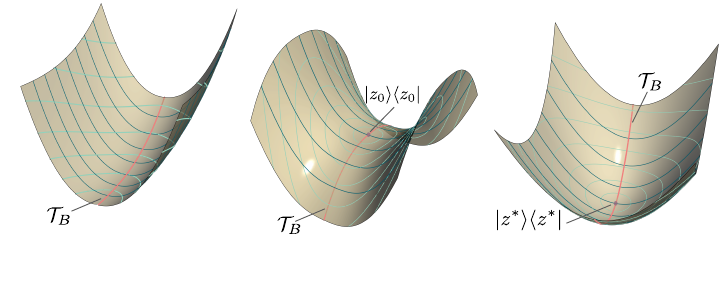}
    \caption{Schematic visualization of different troughs without any computational state, including a local extremum $\ketbra{z_{0}}$, and the optimal state $\ketbra{z^{*}}$, respectively.
    Cyan lines indicate orbits of movements in $i C$ direction. They are also the level sets of the functional. Blue lines indicate orbits of movements in $i B$ direction.
    A component of the \emph{trough}, i.e., a set of states with vanishing first and positive second derivative in $i B$ direction, is marked with red.
    }
    \label{figure:Troughs}
\end{figure*}

If the classical target function $f$ fulfills the conditions of \autoref{theorem:UniversalGates}, the degeneracy of the global minima is lifted and we obtain
\begin{corollary}\label{corollary:IsolatedGlobalMinimum}
    Let $f$ fulfill the conditions of \autoref{theorem:UniversalGates}. Then $F$ admits its only local and global minimum at $\ket{z_{0}} \bra{z_{0}}$, where
    \begin{align*}
        z_{0} = \mathop{\mathrm{arg\ min}}_{z \in \bits} f(z).
    \end{align*}
\end{corollary}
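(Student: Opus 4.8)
The plan is to read off the statement as an immediate specialization of \autoref{proposition:LocalMinimumIsGlobal}, once one has unwound what \emph{severing} means and used the non-degeneracy of $f$ to pin down the ground state. First I would recall that, by definition, $f$ being severing means precisely that $f$ satisfies conditions (a) and (b) of \autoref{theorem:UniversalGates}; hence $\lie = \su(2^{N})$ and $B$ and $C$ are universal generators of QAOA-gates. This places us exactly in the hypotheses of \autoref{proposition:EigenstatesAreCriticalPoints}, \autoref{corollary:ExtremumCharacterization}, and \autoref{proposition:LocalMinimumIsGlobal}, which is all we need.

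Next I would invoke \autoref{proposition:LocalMinimumIsGlobal}: every local minimum of $F$ on $\Omega$ is already a global minimum and corresponds to a ground state of $H$. Since $F$ is continuous and $\Omega$ is a closed — hence compact — manifold (as noted in \autoref{subsection:TheGeometryOfAccessibleStates}), a global minimizer exists, so $F$ does have at least one local minimum, and the set of local minima coincides with the set of ground states of $H$ inside $\Omega = \purestates$. It then remains to identify this ground-state set. Under the encoding \eqref{equation:Encoding}, the eigenvalues of $H$ are exactly the values $\{f(z) : z \in \bits\}$; condition (a) (non-degenerate values) forces these to be pairwise distinct, so $H$ has a unique, non-degenerate ground state, namely the computational basis state $\ket{z_{0}}\bra{z_{0}}$ associated with the classical minimizer $z_{0}$ of $f$. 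Combining the two observations, $F$ has exactly one local minimum, it is located at $\ket{z_{0}}\bra{z_{0}}$, and it is simultaneously the global minimum — by the Rayleigh--Ritz inequality its value is the ground-state energy $\min_{z} f(z)$.

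I do not expect a genuine obstacle here: the result is essentially bookkeeping on top of \autoref{proposition:LocalMinimumIsGlobal}. The only two points that deserve a line of care are (i) that the notion of ``local minimizer'' appearing in \autoref{proposition:LocalMinimumIsGlobal} is already the one intrinsic to the quotient manifold $\Omega$, so no reconciliation of definitions is required, and (ii) the existence of a minimizer, which is immediate from compactness of $\Omega$ together with continuity of $F$. Everything else follows by chaining the earlier statements.
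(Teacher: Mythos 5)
Your proposal is correct and follows essentially the same route the paper takes: specialize \autoref{proposition:LocalMinimumIsGlobal} to the severing case, where condition (a) of \autoref{theorem:UniversalGates} makes the spectrum of $H$ non-degenerate, so the unique ground state is $\ket{z_{0}}\bra{z_{0}}$. The added remarks on compactness of $\Omega$ guaranteeing existence of a minimizer are a harmless (and correct) extra line of care that the paper leaves implicit.
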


In summary, a generic classical optimization problem, that is, with objective function $f$ fulfilling the conditions of \autoref{theorem:UniversalGates}, induces a very simple optimization geometry on $\Omega$: the critical points of $F$ precisely are the computational basis states and the only local and also global minimum is obtained at the state corresponding to the optimal solution to $f$.
Conversely, at any other point $\omega \in \Omega$, we can always find a direction $A\in \lie$ along which the functional $F$ strictly decreases.
This simple structure is partially lost in the following chapter, where we only allow certain directions.
The loss of information results in the appearance of additional local minima.

Our previous results are to be compared with \cite{Arenz2021ProgressTowardFavorableLandscapesInQuantumCombinatorialOptimization}:
While we show the absence of local minima in the state space landscape, they prove that certain VQA ansätze for MAXCUT result in local minimum-free parameter landscapes.

\section{\label{section:DeepCircuits}Deep Circuits}

We will now discuss the regime of deep circuits.
Here we consider local search routines on circuits with large but not fixed $p$. 
For this, we can keep in mind an intuitive picture of what a local search does: In search of the minimum of $F$, we maneuver through the state space landscape by small steps.
In doing so, we are limited to evaluate $F$ only pointwise and locally.

In contrast to the asymptotic regime from the previous section, we no longer assume that we can move in any direction from $\lie$.
In the explicit case of the QAOA, the remaining directions are $i B$ and $i C$, since moving in any another direction, e.g., $[B, C]$, would in principle require an infinite sequence of infinitesimal gates. 

Additionally, it is  not guaranteed that all states from $\Omega$ can be reached by circuits with a depth $p$ of reasonable order of magnitude.

\subsection{\label{subsection:LocalAttractorsAndTraps}Local Attractors and Traps}

As a direct consequence of restricting search movements to the directions $i B$ and $i C$, it can happen that we loose the ability of escaping from a saddle point. 
This effectively introduces new local minima when transitioning from the asymptotic to the deep-circuit regime.
As we will see, those can, in fact, be uncountably many. 

As before, we can identify local minima by looking at first and second derivatives in the available movement directions. 
Since $F$ stays constant under transformations in $i C$ direction, we only have to consider directional derivatives in the direction $i B$. 
At a state $\phi$ those are given by
\begin{align}
    \nabla_{i B}^{(1)} F \left(\phi\right)= i \tr\left(\phi \; [H, B] \right) \eqqcolon \tr\left(\phi \; F_{1}^{B}\right)\label{equation:f1}
\end{align}
and 
\begin{align}
    \nabla_{i B}^{(2)} F \left( \phi\right) = \tr\left( \phi \;[B,[H, B]] \right) \eqqcolon \tr\left( \phi \; F_{2}^{B} \right).\label{equation:f2}
\end{align}
We substitute the operators $F_{1}^{B}$ and $F_{2}^{B}$, as above, to shorten notations in the following. 
A state will turn into a local minimum for deep circuits if the first derivative vanishes and the second derivative is positive.
This corresponds to the semi-definitely constrained set 
\begin{align}
    \tb \coloneqq \{\phi\in\Omega : \tr\left( \phi \; F_{1}^{B} \right) = 0,\, \tr\left(\phi \; F_{2}^{B}\right) > 0\}\label{equation:Trough}.
\end{align}

Such a set will typically contain a full continuum of points. To get an intuition for this, we can embed $\Omega$ into the Hilbert-Schmidt space. Here, fulfilling the conditions for membership in \eqref{equation:Trough} corresponds to the intersection of the state space with a high dimensional half space and a high dimensional subspace.
Note that, in the generic case, $\mathcal{T}_B$ will be disjoint. In the following, we will refer to its conjoint components as \emph{troughs}.
\begin{observation}\label{observation:Trough}
    Troughs are attracting regions for QAOA with local search routines.
    In, particular, once it comes close to a trough, local search cannot leave the vicinity of the trough unless global movements are performed.
\end{observation}

We will call these global movements \emph{jumps}.
On a purely empirical level, we observe this behavior throughout our numerical studies.
More details are given in \autoref{section:NumericalResults}. 
On a theoretical level, a fully rigorous proof for this is unfortunately difficult to provide, since the full body of local search routines is hard to capture in a mathematical statement. 
We can, however, give some clear theoretical intuitions: 
By construction, we have that, when starting from an arbitrary state $\phi$, the minimization of $F$ along a trajectory $\pi_{\phi}(e^{i \beta B})$, this is, only moving along direction $i B$, will eventually end up in the vicinity of $\tb$.
For a simple local search routine, this is indeed likely to happen, since following this direction will guarantee a monotonous descend.
However, for line search methods, such as \emph{gradient descent}, the delicate challenge of finding a step length that prevents slow convergence on the one and overshooting on the other hand \cite{Nocedal1999NumericalOptimization}, complicates reaching $\tb$ exactly.
Grid search methods, like \emph{hill-climbing}, may even get stuck at states with non-vanishing first derivative because the predetermined step size does not allow for any beneficial steps.

Performing $i C$ in between the $i B$ movements may diminish this behavior, forcing the local search onto a more complicated path.
But if, in the end, this routine also minimizes the gradient in the parameter landscape, it will ultimately converge to $\tb$ as well.
\begin{observation}\label{observation:ZigZag}
    Alternating between $i B$ and $i C$ movements causes the local search to zigzag around the true descent direction of the functional.
\end{observation}

This behavior is typical for local search routines \cite{Russell2009ArtificialIntelligenceAModernApproach} and we also observe it throughout our numerical studies.

\begin{figure}
    \centering
    \includegraphics[width=0.9\linewidth]{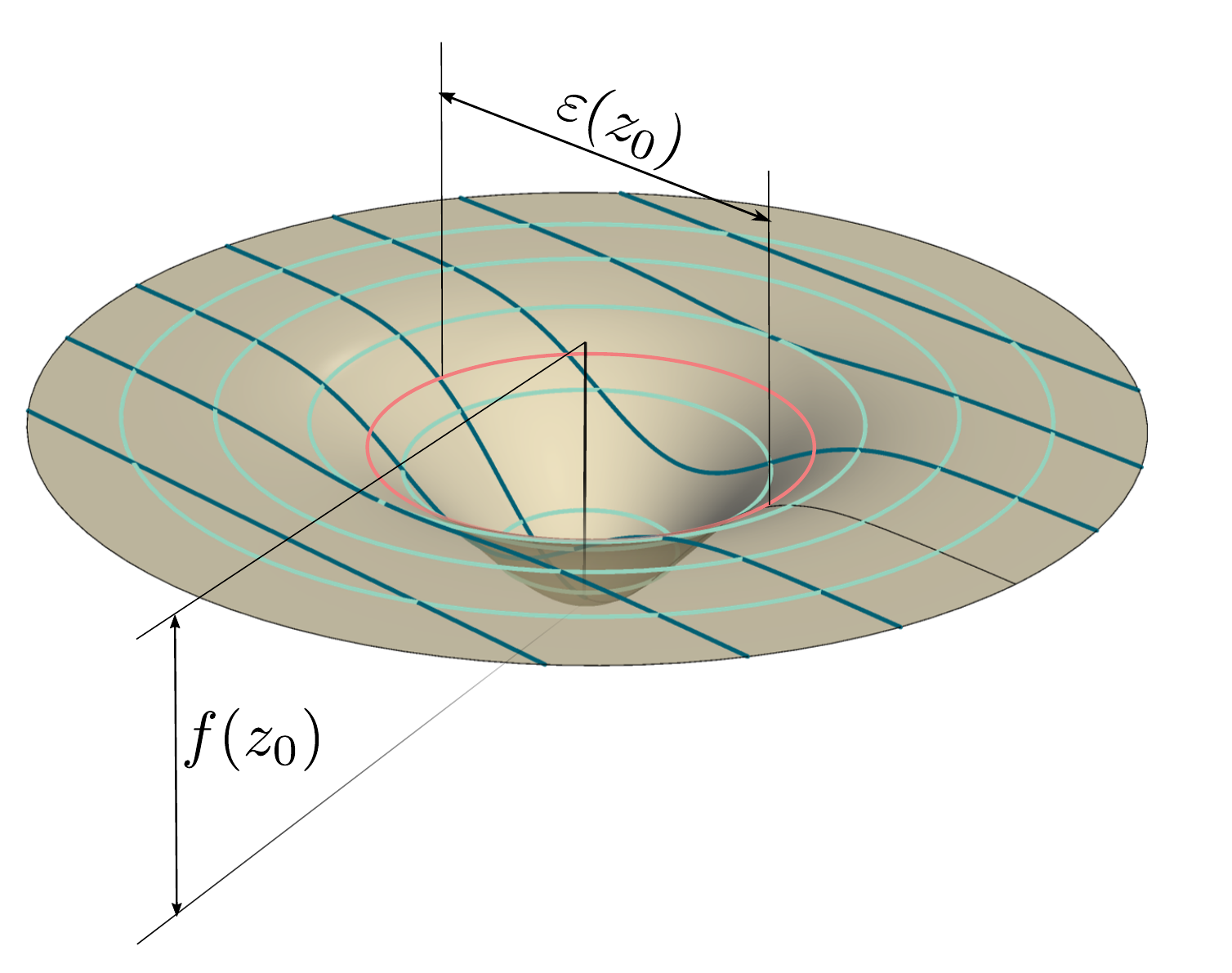}
    \caption{Schematic visualization of a valley (troughs in higher dimensions not indicated):
    Driving with $i C$ (cyan lines), does not change the value of the functional.
    Driving with $i B$ is in some sense the `orthogonal' direction.
    In a valley, all $i B$ trajectories have a local minimum.
    We identify the size of a valley by the region in which the second derivative is positive.}
    \label{figure:Bowl}
\end{figure}

In the case of the QAOA an intuitive explanation for this can be given.
We can regard the trough as a line of local minima, as any $i B$ movement away from it will increase the value of $F$ and, thus, be suppressed by a heuristic seeking local optimality.
By allowing for a component in $i C$ direction during each optimization step, we effectively increase the search routine's local dimensionality.
In general, the resulting step may not align with the true descent direction of the functional causing the next iterate to be slightly displaced from the trough.
One can grasp the situation by imagining a marble placed marginally off the trough depicted on the left in \autoref{figure:Troughs}.
While oscillating around the trough, it will gradually move toward the local minimum.
A common problem in such a situation is that, despite many steps being performed, the value of the functional improves very slowly, if at all.
The severity of this issue depends on the 'steepness' the trough.
Although, refined routines like the conjugate gradient method tend to avoid this behavior and adapting them for our special situation in the QAOA seems plausible, there are points within a trough where local search routines have to face further obstacles.

\begin{observation}\label{observation:LocalTraps}
    Eigenstates of $H$ that are also in $\tb$ act as local attractors; thus, local search tends to get stuck in these.
\end{observation}
We will mark those points as \emph{traps}.
Recall from \autoref{section:AsymptoticCircuits} that all non-optimal eigenstates of $H$ correspond to saddle points in the landscape.
Here, first derivatives vanish such that those states are located in $\tb$ whenever their second derivative in $i B$ direction is positive.
Eigenstates of $H$ are also eigenstates of $C$ rendering them invariant under $i C$ movements.
Moreover, any point in the state space will maintain its distance from an eigenstate of $H$ under $i C$ movements.
We can see this by checking that the trace distance (which is unitarily invariant) between a state $\phi$ and an eigenstate $\ketbra{z_{0}}$ stays constant along a $i C$ trajectory, i.e.\ we have 
\begin{align}
    \Vert \pi_{\phi}(e^{i \gamma C}) - \ketbra{z_{0}}\Vert_{1}
    &= \Vert e^{- i \gamma C} \phi e^{i \gamma C} - \ketbra{z_{0}}\Vert_{1}\nonumber\\
    &= \Vert\phi - e^{i \gamma C} \ketbra{z_{0}} e^{- i \gamma C}\Vert_{1}\nonumber\\
    &=\Vert\phi - \ketbra{z_{0}}\Vert_{1}\ .\label{equation:ConstantTraceDistance}
\end{align}
Due to the saddle point property, the functional will be almost constant, and no $i C$ step allows to leave the region.

Let $\ketbra{z_{0}}$ be an eigenstate of $H$ with eigenvalue $f(z_{0})$. In order to decide whether this state is in $\tb$ we consider its second derivative in $i B$ direction. 
Using the computation in \eqref{equation:SecondDerivative} we obtain
\begin{align}
    \nabla_{i B}^{(2)} F \big(\hspace*{-2pt}\ketbra{z_{0}}\hspace*{-2pt}\big)
    &= \tr\left( \ketbra{z_{0}} F_{2}^{B} \right)\nonumber \\
    &= 2 \sum_{\Ham(z_{0}, z) = 1} f(z) - N f(z_{0}),\label{equation:SecondDerivativeB}
\end{align}
where $\Ham(z, z')$ denotes the Hamming distance between the two bit strings $z$ and $z'$.
This expression has a nice interpretation.
In the above, the sum 
\begin{align*}
	\sum_{\Ham(z_{0}, z) = 1} f(z) = \bra{z_{0}}BHB\ket{z_{0}}
\end{align*}
comes from the special form of $B$ that is characteristic for the QAOA.
Up to a factor of $N$, it can be interpreted as the average of $f$ taken over all next nearest neighbors of $z_{0}$ in the hypercube $\bits$.
For what follows, it is useful to define the \emph{discrete gradient mean} at a point $z_{0}$,
\begin{align}
	\smv(z_{0}) \coloneqq \sum_{\Ham(z_{0}, z) = 1} \frac{f(z)- f(z_{0})}{N},
    \label{equation:Mu}
\end{align}
i.e., the average difference between the value of $f$ at $z_{0}$ and all neighboring strings.
This average is proportional to the second derivative and suffices to determine what happens to the saddle point at $z_{0}$ when restricting to the deep-circuit regime. 

Moreover, note that a string $z_{0}$ of length $N$ only has $N$ neighbors in $\bits$.
Therefore, the quantity $\mu(z_{0})$ can be efficiently determined by a classical computation that merely checks and adds the values of $f$ at those points.

\subsection{\label{subsection:TrapSizes}Trap Sizes}

Once a local search gets stuck in a trap, jumps, i.e. steps on a larger scale, have to be performed in order to escape.
We can estimate the required jump size by considering the distance from a trap to the closest state with negative second derivative.
This, arguably heuristic, quantity provides the scale at which the local behavior around a trap stops to dominate.
See \autoref{figure:Bowl} for a visualization.
From now on, we will refer to a region around a trap in which all second derivatives are non-negative as \emph{valley}. 

Assume a trap centered at a state $\ketbra{z_{0}}$ and an $\varepsilon$-neighborhood 
\begin{align*}
    \{\rho_\varepsilon \in \Omega \defcolon \norm{\rho_\varepsilon - \ketbra{z_{0}} {z_{0}}}_{1} < \varepsilon\}.
\end{align*} 
By definition we can express any state in this neighborhood as
\begin{align*}
    \rho_\varepsilon = \ketbra{z_{0}}{z_{0}} + \varepsilon K \text{ with } \norm{K}_{1} \leq 1, \ K = K^\dagger.
\end{align*}

Thus, we can bound the second derivative of $\rho_{\varepsilon}$ by the estimate 
\begin{align}
    \nabla_{i B}^{(2)} F\left(\rho_{\varepsilon}\right) 
    &=\tr\left(\rho_{\varepsilon} F_{2}^{B} \right) \nonumber\\
    &\geq \inf_{\Vert K\Vert_{1}\leq 1} 
    \left(\tr\left(\ketbra{z_{0}} {z_{0}} F_{2}^{B} \right) + \varepsilon\tr\left(K F_{2}^{B} \right)\right) \nonumber\\
    &= 2 N \mu(z_{0}) -\varepsilon \sup_{\Vert K\Vert_{1}\leq 1} \tr\left(K F_{2}^{B} \right) \nonumber\\
    &= 2 N \mu(z_{0}) - \varepsilon\Vert F_{2}^{B}\Vert_{\infty}   \label{equation:EstimateDeep}
\end{align}
where we used the variational formula for the operator norm in the last step.
Thus, by determining or bounding the universal factor $\Vert F^{B}_{2}\Vert_{\infty}$ for an explicit problem instance, one can estimate the size of a valley in the state space landscape by identifying all $\varepsilon$-neighborhoods where the second derivatives are positive.

First, the last line of \eqref{equation:EstimateDeep} bounds the critical $\varepsilon$; i.e.,
\begin{align}\label{equation:EpsilonEstimate}
    \varepsilon <  \frac{2 N \smv (z_{0})}{\Vert F_{2}^{B}\Vert_{\infty}}.
\end{align}

Second, we can provide an upper bound for $\Vert F_{2}^{B} \Vert_{\infty} $ by exploiting the sub-multiplicativity of the operator norm and eventually obtain
\begin{align}\label{equation:AlphaEstimate}
\Vert F_{2}^{B} \Vert_{\infty} &= \Vert [B, [B, C]] \Vert_{\infty} \nonumber\\
&\leq 4 \Vert B \Vert^{2}_{\infty} \Vert C \Vert_{\infty} = 4 N^{2} \Vert C \Vert_{\infty}.
\end{align}
Combined, we conclude 
\begin{align}\label{equation:FinalEpsilonEstimate}
\varepsilon < \frac{\mu(z_{0})}{2 N \Vert C \Vert_{\infty}}.
\end{align}
This estimate naturally suggests that the region around an eigenstate - particularly the global minimum - where the $B$ derivative exhibits the same properties as the eigenstate itself, decreases in size as the problem size increases ($\sim 1 / N$).
Furthermore, we can incorporate the operator norm of $C$ into $\mu$, by noting that within $\mu$ we can always consider eigenvalues normalized with respect to the largest one.
This yields a convenient redefinition
\begin{align}\label{equation:MuNew}
    \Tilde{\mu}(z_0) \coloneqq \sum_{\Ham(z_{0}, z) = 1} \frac{f(z) - f(z_{0})}{N \Vert C\Vert_\infty}
\end{align}
such that 
\begin{align}\label{equation:FinalEpsilonEstimateNew}
    \varepsilon < \frac{\Tilde{\mu}(z_{0})}{2 N}.
\end{align}
Therefore, the properties of local minima with respect to $B$ depend only on the relative distances between the eigenvalues of $H$ as well as on the problem size $N$.

\begin{figure*}[!ht]
    \begin{minipage}[b]{0.32\linewidth}
    \flushleft (a) \\ \centering
    \includegraphics[width=0.83\linewidth]{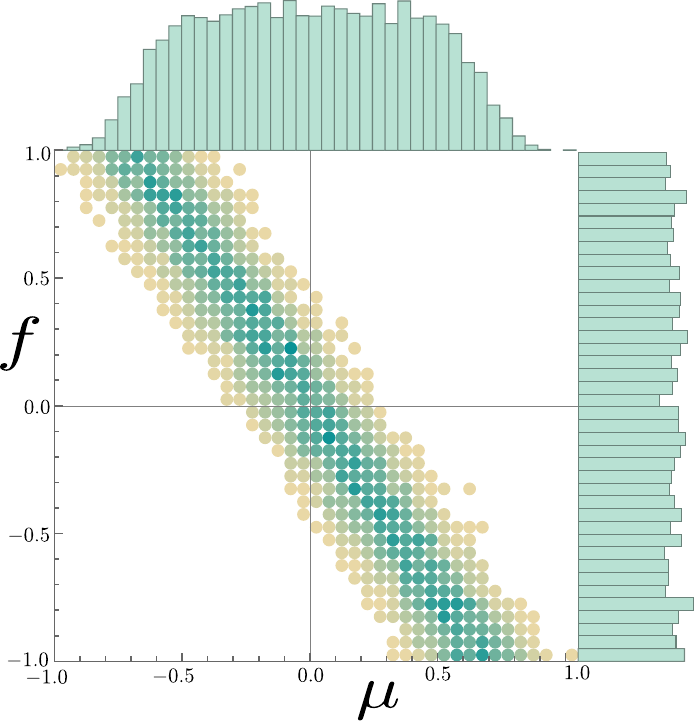} 
    \end{minipage}\hfill
    \begin{minipage}[b]{0.32\linewidth}
    \flushleft (b) \\ \centering
    \includegraphics[width=0.9\linewidth]{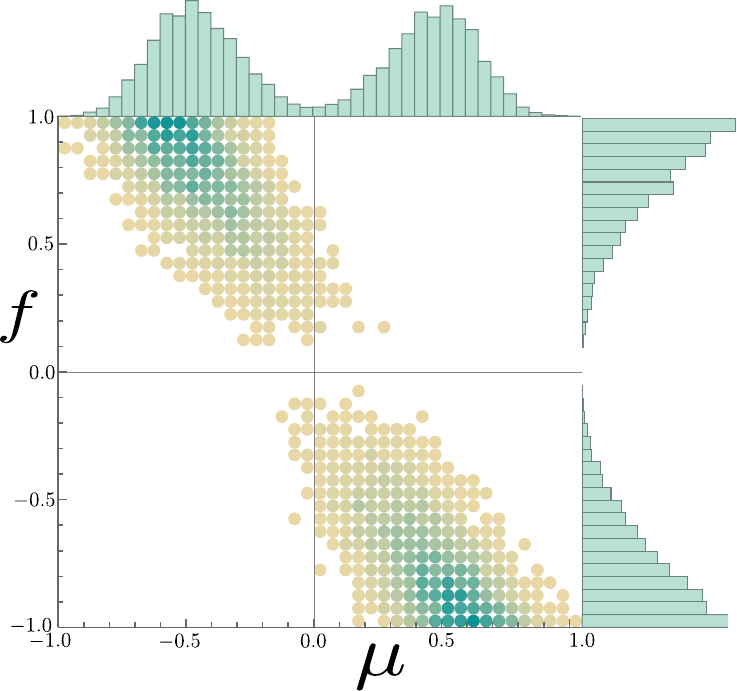}
    \end{minipage}
    \begin{minipage}[b]{0.32\linewidth}
    \flushleft (c) \\ \centering
    \includegraphics[width=0.87\linewidth]{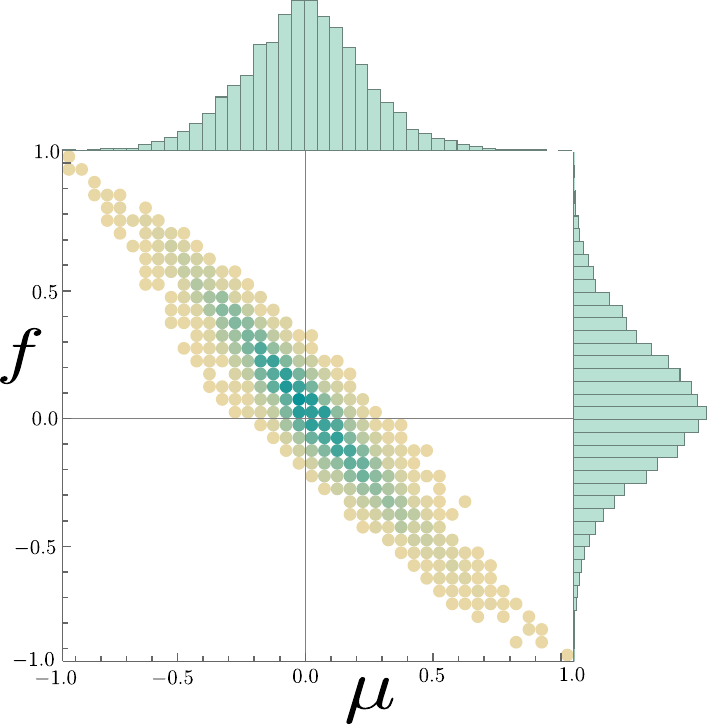}
    \end{minipage}
    \caption{$\mu$\,-$f$ diagrams of some randomly sampled functions $f$ on $13$ bits.
    The values of $f$ are (a) uniformly randomly distributed with support on $[- 1, 1]$ and (b) distributed with respect to a bimodal distribution that favors values at the boundaries of $[- 1, 1]$. From a local search perspective these are considered unfavorable instances. (c), on the contrary, stems from a random QUBO instance. This diagram unveils an optimization landscape that is favorable for QAOA.}
    \label{figure:MuFRandom}
\end{figure*}

\subsection{\label{subsection:APerformanceIndicator}A Performance Indicator}

Valleys have a critical influence on the performance of deep-circuit QAOA as they are encompassing attractors and traps for local search routines.
Intuitively, a landscape with 'too many' valleys, i.e. too many traps, seems to be unfavorable for a good performance. The previous section now equips us with tools for making such a statement more refined:
In the following we consider the distribution of valleys with respect to their number, size, and depth. 
Then, the statistics of these properties allow us to identify obstacles for local search routines. By this we obtain an accessible performance indicator for deep-circuit QAOA.

Recall that the presence of a valley can be attested by the discrete gradient mean $\mu(z)$, where  $\mu(z) \geq 0$ indicates a valley with a radius (in trace distance) bounded by \eqref{equation:FinalEpsilonEstimateNew}. 
For a given target function $f : \bits \rightarrow \R$, consider the set 
\begin{align}\label{equation:MuFSet}
    \Xi_{f} = \{(f(z), \mu(z))\, |\, z \in \bits\}.
\end{align}
From now on, we refer to a density plot of $\Xi_{f}$ as the $\mu$\,-$f$ diagram of $f$.
A significant amount of the optimization landscape's quantitative structure in state space can be directly observed from such a diagram.
Several examples are given in \autoref{subsection:Examples}.

Relevant questions that can be directly answered by examining the $\mu$\,-$f$ diagram are:

\begin{itemize}
    \item[(i)] What is the fraction of points with $\mu > 0$?
    \item[(ii)] Are there correlations between the radius of a valley and its depth? Are the largest valleys also the deepest?
    \item[(iii)] Is there a separation between small and large valleys? Are large valleys less likely than small ones?
\end{itemize}

Quantitative answers to these questions allow us to estimate whether a landscape is in principle favorable or unfavorable for local search routines.

Generally, a landscape with only few valleys (i) can be considered favorable.
Non-primitive routines for local search include subroutines for jumps that allow to escape a local trap and search for an optimum elsewhere.
This is, however, only efficient if there are not too many traps.
Without further structure given, we will assume that valleys with a large radius are more likely to attract a local search routine.
Landscapes whose largest valleys are also the deepest (ii) are favorable, as a local search is more likely to find the global minimum of $f$. 
A central aspect of fine-tuning a local search routine is adjusting the sizes of local steps and jumps.
Here, statistical information about the valley sizes (iii) can be very useful.
In a \emph{favorable landscape}, most valleys are small and shallow and only few valleys (including those we are looking for) are large and deep.
Taking step and jump sizes just large enough will then allow us to set up a local search that effectively ignores small valleys and is only attracted to large ones.
In conclusion, we will consider landscapes as favorable if the largest valley is also the deepest and the distribution of $\mu$ and $f$ thins out for increasing $\mu$ and decreasing $f$, i.e. to the low-right corner of the $\mu$\,-$f$ diagram. 
In the sense that it indicates the presence of performance obstacles, we regard the $\mu$\,-$f$ diagram of a target objective $f$ as a two dimensional performance indicator. 

Here we emphasize once more that the $\mu$\,-$f$ diagram relies solely on data obtainable from classical, efficiently accessible properties of $f$ on strings $z$; specifically, the value of $f$ and its average over nearest neighbors.
On a more refined level, assessing the unfavorability of a landscape involves estimating a tail distribution.
This task is central to the mathematical field of risk management \cite{Baker2022WassersteinSolutionQualityAndTheQuantumApproximateOptimizationAlgorithmAProtfolioOptimizationCaseStudy} and many statistical methods developed there can be applied to our problem.
At this point, however, we leave a detailed analysis and application of those methods for future work and restrict to the provision of examples.

\section{\label{section:NumericalResults}Numerical Results}

In the previous sections, we explored various aspects of the interplay between local search routines and the geometry of the state space optimization landscape.
In this section, we will employ a simple local search routine to demonstrate these on concrete examples. 

At this stage, we will obtain our results by numerically simulating the behavior of a noise-free quantum computer.
Observations inferred from this clearly over-idealized setting therefore only account for ruling out the feasibility of a problem instance and not for ultimately demonstrating it. 

Rather than using Qiskit or similar existing frameworks, we perform simulations on the level of explicitly implementing the matrices and vectors involved.
By this, all major computations can be directly performed by matrix-vector multiplications in an optimized library, such as BLAS.
We thus ensure that the runtime of our simulations scales linearly with the circuit depth $p$, enabling us to simulate very deep circuits within fractions of a second.

\autoref{subsection:Examples} presents a diverse set of examples along with their $\mu$\,-$f$ diagram used to examine the local search algorithm presented in \autoref{subsection:ABasicLocalSearchAlgorithm}.
Indicators for assessing the algorithm's performance are introduced in \autoref{subsection:DescriptionOfDepictedDate} and analyzed in \autoref{subsection:PerformanceOfDeepCircuitLocalSearch}.
There, we also demonstrate the significance of our newly introduced performance indicator, the $\mu$\,-$f$ diagram, using the examples.
Finally, the observations discussed in \autoref{section:DeepCircuits} are detailed in \autoref{subsection:LocalTraps} and \autoref{subsection:StepSizeAnalysis}.

\begin{figure*}[!htp]
    \centering
    \includegraphics[width=0.7\textwidth]{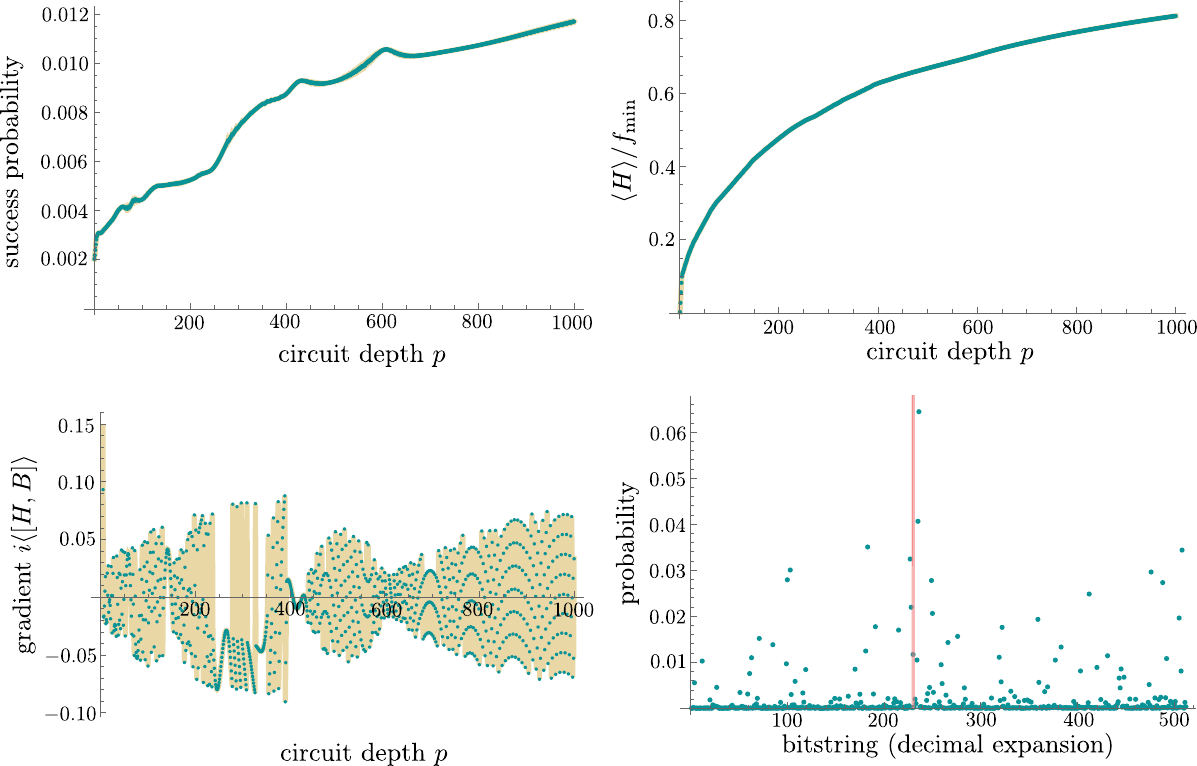}
    \caption{Minimization of a function with uniformly randomly distributed values (step size $\varepsilon = 0.1$).
    As the $\mu$\,-$f$ diagram (\autoref{figure:MuFRandom}(a)) suggests, layer-wise optimization struggles to find an optimum.
    Both, the success probability and the approximation ratio increase only very slowly, and the final state spreads over many computational basis states, with the focus on a non optimal state.
    }
    \label{figure:Uniform}
\end{figure*}

\subsection{\label{subsection:Examples}Examples}

A central question in assessing the potential of QAOA is identifying problem classes and instances where the algorithm is likely to perform well.
However, one must anticipate the existence of 'no free lunch'- theorems implying that most instances will not perform well.   

We immediately observe such a behavior when considering objectives $f$ in which the values of $f$ are distributed without paying attention to the topology of $\bits$.
Examples are given in \autoref{figure:MuFRandom}.
In each case, we first generated the values of f according to a specified distribution function and then randomly assigned them to bit strings using uniform sampling.
In the first example, $f$ is uniformly distributed over the interval $[-1, 1]$, resulting in an unfavorable landscape in which the valley sizes are uniformly distributed, as well.
In the second example (also shown in \autoref{figure:MuFRandom}, the values of $f$ follow a bi-modal distribution that favors both large and small values, leading to a more structured landscape.
This illustrates a clear instance of the "no free lunch" behavior, as the global minimum does not possess any statistically distinguishing features relative to many of the local traps.

This situation changes when we consider random instances of QUBO, i.e.\ functions of the form
\begin{align*}
    f(z) = \sum_{i, j} z(i) M_{i j} z(j),
\end{align*}
generated from a randomly chosen hermitian matrix $M$ with unit norm. 
Here, we observe a mutual dependence among the values of neighboring bit strings, accompanied by a thin-tailed distribution -- clearly visible on the right in  \autoref{figure:MuFRandom} -- which results in a favorable landscape.
In correspondence, we also see from our numerical studies will reveal that even the a local search using the simple downhill simplex method performs better on these instances. We expect that this tail behavior can be explained by employing the central limit theorem.
However, we leave a full proof for future work.

A review of the catalog of of special classes of pseudo-Boolean functions \cite{Crama2011BooleanFunctionsTheoryAlgorithmsAndApplications}  quickly reveals additional classes that admit favorable optimization landscapes.
Among them are functions that are zero for the vast majority of bit strings, with negative values occurring only on a small subset.
Here, the landscape is mostly flat, and all valleys will encompass a global minimum. 

For example, SAT problems with only few feasible points constitute such functions. 
Remarkably those are in close correspondence to the `find the marked entry in a database' problem Grover's search algorithm \cite{Grover1996AFastQuantumMechanicalAlgorithmsForDatabaseSearch} solves with a proven quantum speedup.

\begin{figure*}[!htp]
    \centering
    \includegraphics[width=0.7\textwidth]{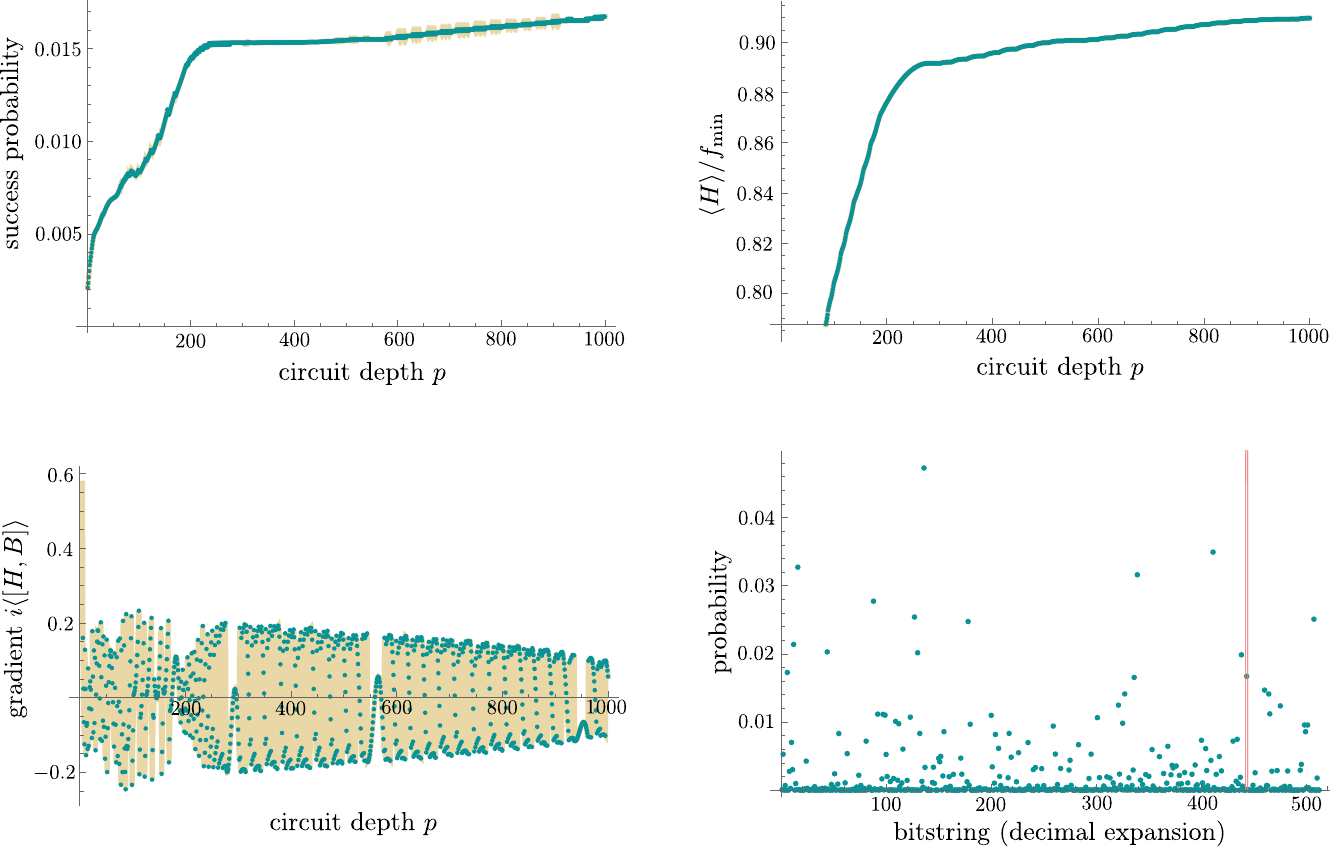}
        \caption{Minimization of a function with bimodally randomly distributed values (step size $\varepsilon = 0.1$).
        In alignment with the prediction of the $\mu$\,-$f$ diagram (\autoref{figure:MuFRandom}(b)), layer-wise optimization performs poorly on this instance.
        After success probability and approximation ratio quickly increase within the first 200 iterations, they plateau and the final state's outcome distribution has no clear focal point.
        }
    \label{figure:Bimodal}
\end{figure*}

\subsection{\label{subsection:ABasicLocalSearchAlgorithm}A Basic Local Search Algorithm}

We use a basic and naive local search routine that may be viewed as a state-space version of \emph{pattern search} \cite{Hooke1961DirectSearchSolutionOfNumericalAndStatisticalProblems} or \emph{random search} \cite{Anderson1953RecentAdvancesInFindingBestOperatingConditions,Brooks1958ADiscussionOfRandomMethodsForSeekingMaxima}.
It works as follows:
\begin{itemize}
    \item[0.] Fix an initial state $\rho_{0}$ and a set of $m$ unitaries $\mathcal{U}^{\varepsilon} = \{U^\varepsilon_{1}, \dots, U^{\varepsilon}_{m}\}$ and set $p = 0$.
    \item[1.] For $i = 1, \dots, m$, prepare states $\tau_{i} = U^{\varepsilon}_{i} \rho_{p} U^{\varepsilon}_{i}\phantom{}^{*}$.
    \item[2.] Measure the target Hamiltonian on the states $\tau_{i}$, i.e. compute $f_{i} = F(\tau_{i})$
    \item[3.] Pick $i^*=\operatorname{argmin}_{i = 1,\dots,m} f_{i}$, set $f_{i^{*}}$ as active estimate, set $\rho_{p + 1}=\tau_{i^{*}}$ and go to $1$.
\end{itemize}
If all $U^{\varepsilon}_{i}$ are close to the identity this will generate a simple local search routine.
Here, the step size in state space is determined by the difference between $U^{\varepsilon}_{i}$ and the identity, evaluated on the active state $\rho_{p}$.
An upper bound to this distance is given by the maximum operator norm distance between the $U^\varepsilon_{i}$ and the identity. 
In our concrete examples, we take the $U^\varepsilon_{i}$ to be a single layer of QAOA unitaries, with parameter tuples $(\beta_{i},\gamma_{i})$ chosen from a grid-like pattern centered around zero.
The grid spans a range of $\varepsilon/ \Vert H \Vert_{\infty}$, with $11$ equidistant values for $\beta$ and $5$ for $\gamma$.
For small $\varepsilon$, and by setting $\Vert H \Vert_{\infty} = 1$, the parameter $\varepsilon$ corresponds, at least approximately, to an upper bound on the actual step size in state space.
In any case, $\varepsilon$ can be regarded as an effective size parameter that defines the characteristic scale on which the local search routine operates. 

It is important to point out that our intention for employing this type of algorithm is to investigate the fundamental behavior of local search routines and not to achieve the best possible performance.
They usually offer high robustness in progressively improving estimates of an optimization target, without requiring extensive preconditioning tailored to a specific problem instance.
This robustness, however, typically comes at the price of requiring many optimization rounds.
To achieve faster convergence, more refined algorithms and considerable case-specific feature engineering is necessary.

\begin{figure*}[!htp]
    \centering
    \includegraphics[width=0.7\textwidth]{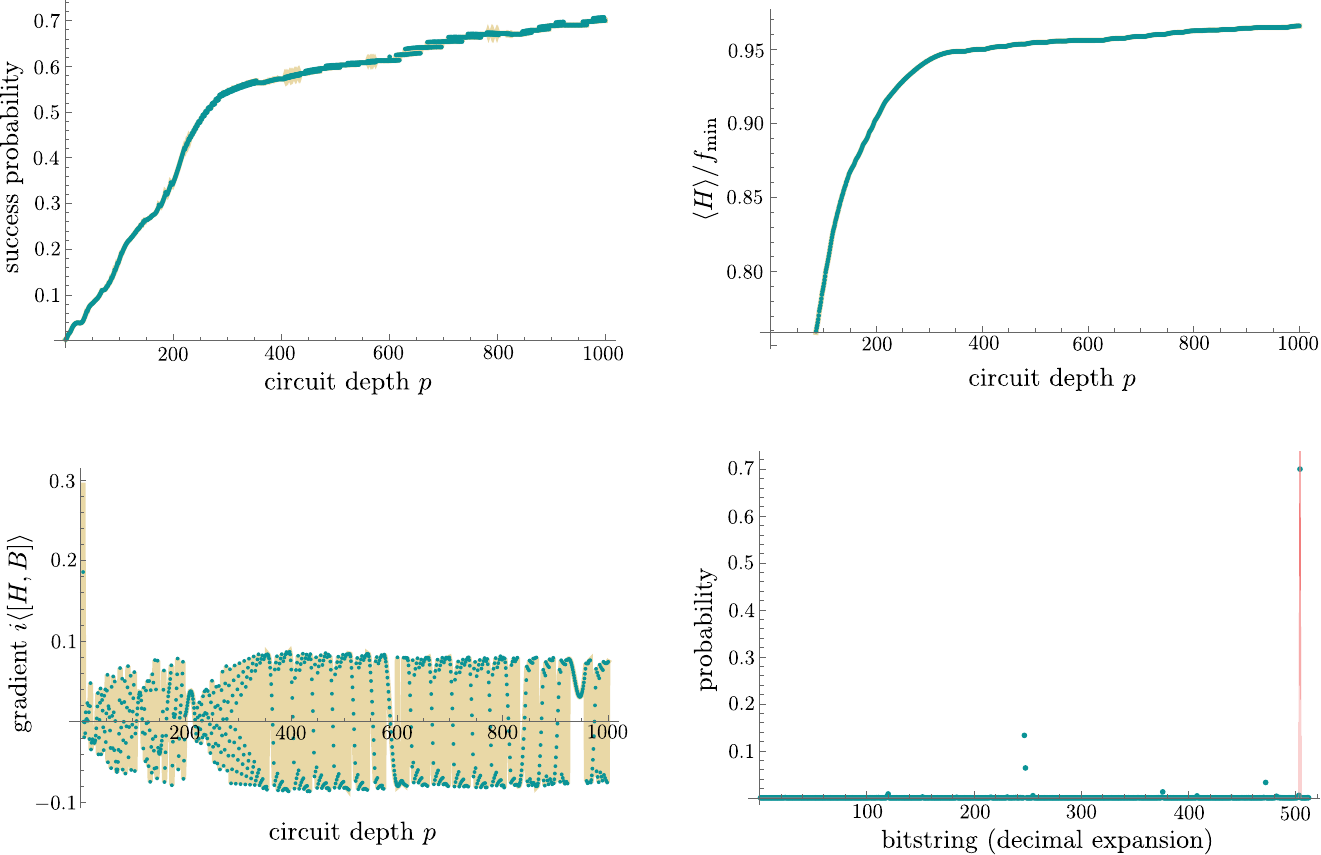}
    \caption{Minimization of a random QUBO instance on nine bits (step size $\varepsilon = 0.1$).
    In accordance with the $\mu$\,-$f$ diagram(\autoref{figure:MuFRandom}(c)), the layer-wise optimization reliably finds the global optimum.
    Both, success probability and approximation ratio increase significantly with circuit depth, and the final state exhibits a strong concentration around the optimum with only a small amplitude spread over other computational basis states.}
    \label{figure:RandomQUBO}
\end{figure*}

\subsection{\label{subsection:DescriptionOfDepictedDate}Description of Depicted Data}

Characteristic information from each run is presented through the following indicators:
\begin{itemize}
    \item[(i)] \textbf{Success probability} of obtaining the optimal outcome string for a given optimization problem. This is computed by measuring the overlap of the optimal state with the  respective active state $\rho_{p}$, and plotted as a function of the circuit depth $p$, i.e., the number of algorithmic rounds.
    \item[(ii)] \textbf{Approximation ratio} $\langle H\rangle /f_{\min}$ obtained by measuring the target Hamiltonian on the active state $\rho_{p}$; again as a function of the number of optimization rounds.
    \item[(iii)] \textbf{Gradient magnitude} of the functional $F$ along an $i B$ trajectory, evaluated at the active state $\rho_{p}$; again as a function of the number of optimization rounds.
    \item[(iv)] \textbf{Outcome probability distribution} of bit strings obtained by measuring the final active state $\rho_{p_{\max}}$ in the computational basis.
    Bit strings are mapped to integers via binary-to-decimal conversion.
    In each plot, the red line indicates the position of the optimal solution(s) for the given problem instance.
\end{itemize}

\begin{figure*}[!htp]
    \centering
    \includegraphics[width=0.7\textwidth]{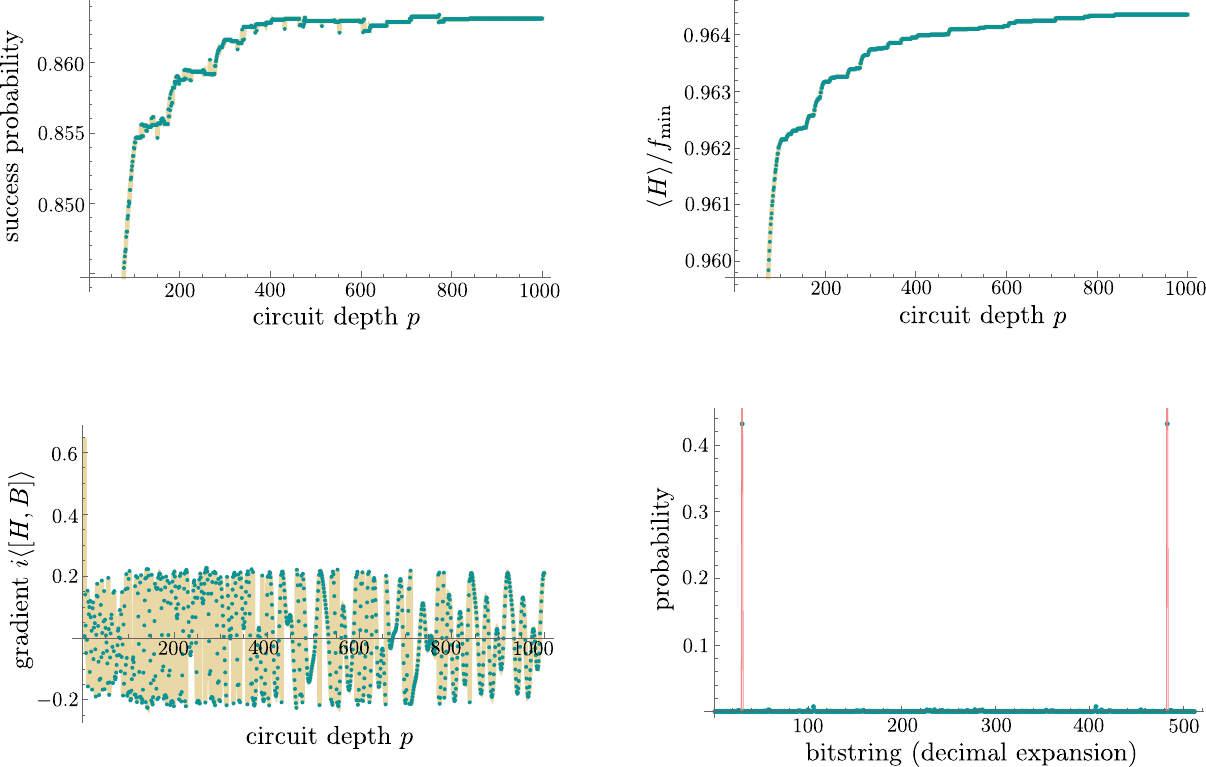}
    \caption{Solving MAXCUT on random graph (step size $\varepsilon = 0.1$).
    Since MAXCUT is formulated as a QUBO problem, a $\mu$\,-$f$ diagram similar to \autoref{figure:MuFRandom}(c) is expected, which again provides an accurate prediction:
    The layer-wise optimization is able to find the two optima with almost certainty.
    Success probability and approximation ratio rapidly increase to practical levels within the first 100 iterations.
    The final state exhibits an almost perfect concentration around the global optima.}
    \label{figure:RandomGraph}
\end{figure*}

\subsection{\label{subsection:PerformanceOfDeepCircuitLocalSearch}Performance of Deep-Circuit Local Search} 

In this subsection, we analyze the results gathered from the local search routine with step size $\varepsilon = 0.1$ on example instances with nine bits, in contrast to the 13-bit instances used in the $\mu$-$f$ diagrams depicted in \autoref{figure:MuFRandom}.
However, we will observe, qualitatively, the exact performance behavior that is predicted by the $\mu$\,-$f$ diagrams for the larger instances.
First, the performance on the objective function with random values sampled from a uniform distribution on $[-1, 1]$ depicted in \autoref{figure:Uniform}:
Although the approximation ratio increases steadily, the success probability of approximately $0.0125$ after 1000 iterations is comparatively small.
Accordingly, the final outcome distribution is spread out, showing no significant peak around either local or global minima.
In fact, the approximation ratio seems to slowly increase proportionally to $p^{2}$.
Furthermore, the gradient oscillates as a function of $p$ with slowly oscillating amplitude.

Second, we analyze a concrete example of a random function with values sampled from a bimodal distribution on $[-1, 1]$ (see \autoref{figure:Bimodal}):
At first glance, the most striking feature is the rapid increase of the approximation ration within the first 300 iterations, followed by its saturation at a comparatively high level.
Since this behavior is not reflected in the success probability -- which remains at approximately $0.015$ even after 1000 iterations -- we infer that the algorithm initially populates states with good, but suboptimal, objective values.
This is expected, as functions of this kind typically possess a large number of such states.
In that spirit, the final outcome distribution is as spread out as in the uniform case and the gradient once again oscillates as a function of $p$, with a slowly decreasing amplitude.

\begin{figure*}[!htp]
    \centering
    \includegraphics[width=0.7\textwidth]{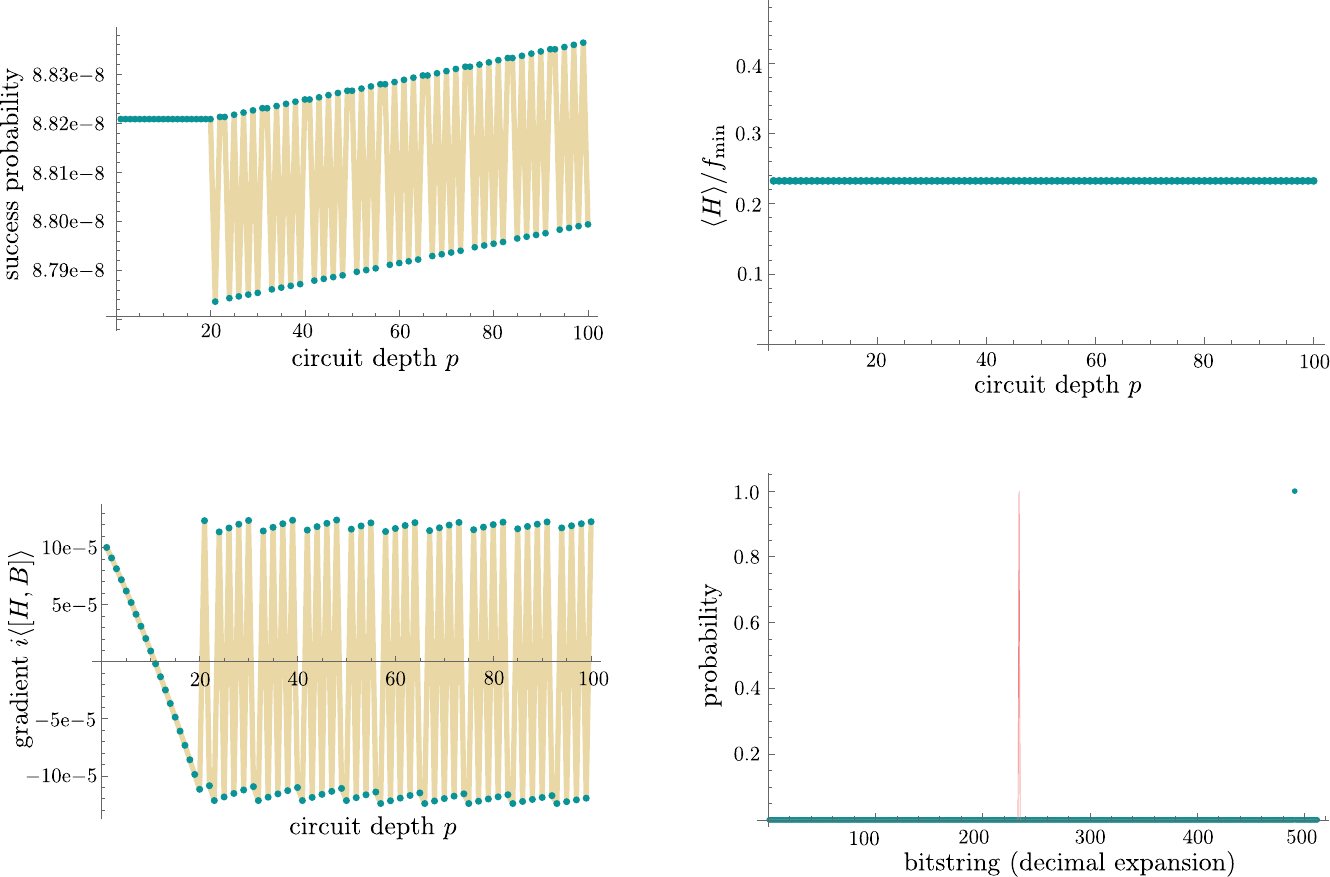} 
    \caption{Minimization of a random QUBO instance starting with a state $\ketbra{\phi_{0}}{\phi_{0}}$ that is close to a non-optimal state $\ketbra{z}{z}$ (step size $\varepsilon = 0.001$).
    Although the $\mu$\,-$f$ diagram predicts, in principle, a favorable optimization landscape, the chosen initial state as well as the very small step size do not allow for finding the global optimum at all.
    Accordingly, the success probability starts out very low and remains basically unchanged; any supposed upwards trend is more likely due to numerical instabilities.
    The approximation ratio also remains constant at the value of $f(z) / f_{\min}$.
    The gradient's norm is negligibly small, indicating that the optimization routine is actually stuck at the local optimum $\ketbra{z}{z}$, as is also clear from the final state's amplitude distribution.}
    \label{figure:AroundZero}
\end{figure*}

Next, we revisit the instances that appear, based on the $\mu$\,-$f$ diagram, to correspond to favorable optimization landscapes.
As shown in \autoref{figure:RandomQUBO}, the numerical data from the random QUBO problem offers clear and unequivocal support for this hypothesis.
Not only does the approximation ratio increase significantly within the first 300 iterations, but the success probability also rises to as high as $0.7$ over the course of 1000 iterations -- substantially higher than in both preceding examples.
Accordingly, the final outcome distribution strongly concentrates around the optimal solution.
The gradient also behaves similarly to the bimodal case, regarding both its oscillation and its amplitude.

In addition, we consider a more concrete example: a MAXCUT instance for a random graph with nine vertices and unweighted edges.
The resulting objective function is also of QUBO-type and we therefore expect a favorable optimization landscape.
However, the function merely is integer-valued.
Furthermore, it is symmetric under flipping each bit of its argument.
Therefore, the criteria of \autoref{theorem:UniversalGates} are not met and it is thus of particular interest whether our performance indicator, the $\mu$\,-$f$ diagram, still applies.
Indeed, our numerical results (step size $\varepsilon = 0.1$) indicate a favorable landscape (see \autoref{figure:RandomGraph}):
After $1000$ iterations, the success probability even exceeds that of the previous QUBO instance by roughly $0.17$.
Likewise, the final outcome distribution strongly concentrates around the two optimal solutions.
We again observe a saturation of the approximation ratio after 300 steps, but with an even higher saturated ratio than in the previous case.
The gradient also admits oscillations in $p$, but its overall amplitude is doubled in comparison to the previous case.

\begin{figure*}[!htp]
    \centering
    \includegraphics[width=0.7\textwidth]{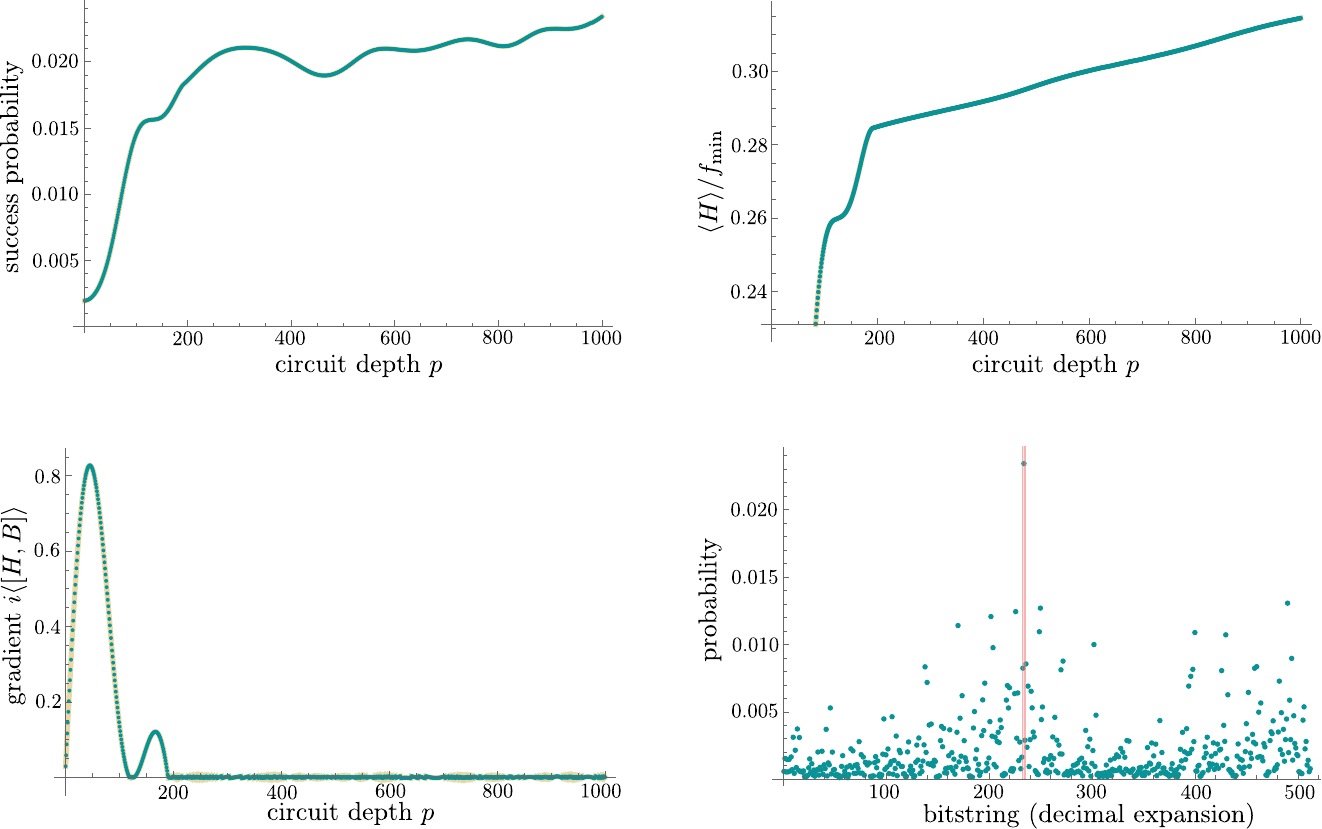}
    \caption{Minimization of a random QUBO function using a small step size ($\varepsilon = 0.01$).
    The success probability and the approximation ratio both show a slow, non-saturated increase with increasing circuit depth.
    The gradient's norm behaves similar to the random QUBO case with larger step size of $\varepsilon = 0.1$: starting with an initial peak and remaining small and almost constant afterwards.
    The final state has most of its amplitude spread out across many computational basis states.
    }
    \label{figure:SmallSteps}
\end{figure*}

\subsection{\label{subsection:LocalTraps}Troughs and Local Traps}

One particular feature that has only been briefly discussed so far is the behavior of the gradient across all problem instances.
Although not directly relevant for our algorithm -- since we are using a gradient-free optimization -- it nonetheless highlights a peculiarity of local search routines.
Starting from a point with a comparatively large gradient, the search typically reaches a region of moderate gradient magnitude within just a few steps.
This behavior corresponds to \autoref{observation:Trough}, where we identified such regions as troughs.
Beyond this point, the gradient begins to oscillate around zero, even as the local search continues to improve the objective function.
This suggests that the optimization steps are not aligned with the true descent direction of the objective, which led us to \autoref{observation:ZigZag}.

On the other hand, the existence of local traps plays a central role for the success chances of a local search routine.
As outlined in \autoref{subsection:LocalAttractorsAndTraps}, these traps arise from saddle points in the asymptotic landscape.
A low-layer optimization is unable to escape them, since the required move involves directions that demand unitaries of potentially asymptotic circuit depth.
In \autoref{observation:LocalTraps}, we identified them with the eigenstates of the target Hamiltonian.
\autoref{figure:AroundZero} provides strong numerical support for this.
There, we intentionally initialize the state as $\rho_{0} = \ketbra{\phi_{0}}{\phi_{0}}$, chosen to be close to a non-optimal eigenstate of a Hamiltonian corresponding to a random QUBO instance on 9 qubits.
We employed the algorithm from \autoref{subsection:ABasicLocalSearchAlgorithm}, using a small step size ($\varepsilon = 0.001$).
It was terminated after 100 iterations, as the results indicate that the search remains within the vicinity of the exact same non-optimal eigenstate.
The approximation ratio remains constant at the objective value of the non-optimal eigenstate, and the final probability distribution is entirely concentrated on this eigenstate.
A more refined behavior can be inferred from the success probability and the gradient which are both depicted on a very fine scale.
For the first 20 steps the success probability stays constant and the gradient slowly decreases.
Afterwards the gradient and the success probability begin to oscillate.
While the gradient stays small, the success probability begins to grow linearly.
Both effects happen on a very small scale, i.e., the gradient stays almost zero and the linear growth is almost flat.
Even though the effective step size was on order $10^{-3}$ the slope of the success probability is much smaller, namely on the order of $10^{-12}$.
Therefore, the algorithm is effectively trapped.

\begin{figure*}[!htp]
    \centering
    \includegraphics[width=0.7\textwidth]{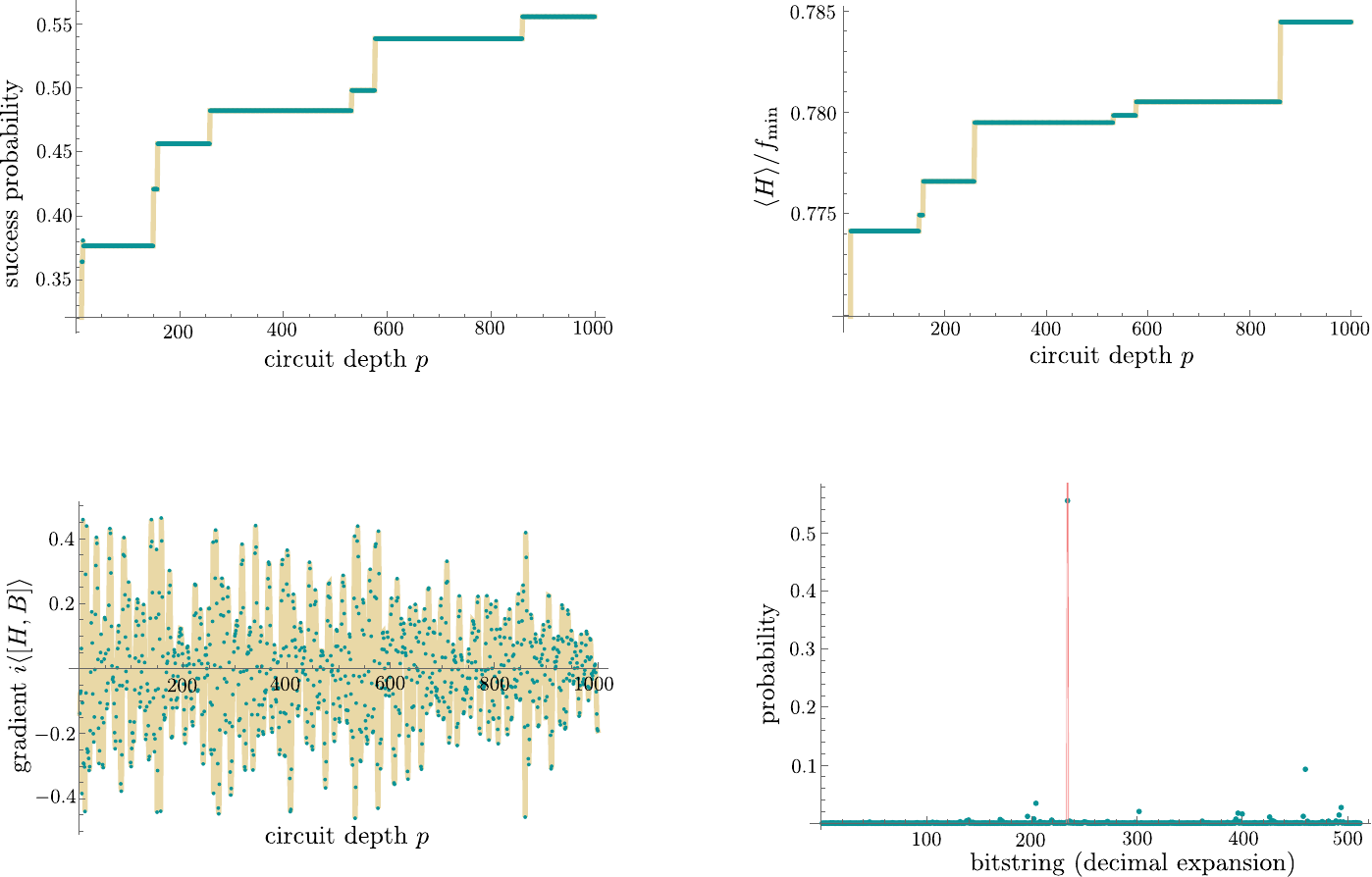}
    \caption{Minimization of a random QUBO function with large step size ($\varepsilon = 1$).
    Both the success probability and the approximation ratio exhibit plateau regions interrupted by sudden jumps to significantly higher values.
    Analogously, the gradient exhibits large and erratic oscillations around zero, effectively overshadowing any initial peak.
    Notably, the final state shows a strong concentration around the global optimum -- only moderately worse than in the case of the tuned step size $\varepsilon = 0.1$.
    }
    \label{figure:BigSteps}
\end{figure*}

\subsection{\label{subsection:StepSizeAnalysis}Step Size Analysis}

For the same random QUBO instance as before, we further investigate the algorithm's behavior for two different step sizes, $\varepsilon \in \{0.01,\, 1\}$.
The smaller step size, $\varepsilon = 0.01$, leads to a very slow traversal of the favorable optimization landscape (see \autoref{figure:SmallSteps}).
The success probability evolves similarly to that in the previously considered QUBO instance with moderate step size;
however, the rate of increase is substantially slower.
As a result, the final outcome distribution is smeared out resembling the behavior seen in the case of an unfavorable landscape.
Due to this slow progress, the approximation ratio does not reach saturation within the first 1000 iterations.
Instead, it exhibits a shallow linear increase over a broad range of steps.
Additionally, after an initial burst, the gradient's amplitude becomes very small -- once again indicating again the slow motion through the optimization landscape.
In summary, the chosen step size clearly is too small to reach the optimum within a reasonable number of iterations.
However, the underlying optimization landscape appears to permit a direct traversal from the initial state to the optimum, provided a sufficiently large circuit depth.

Meanwhile, the larger step size, $\varepsilon = 1$, results in a completely different behavior (see \autoref{figure:BigSteps}).
The success probability remains constant over extended intervals of iterations, exhibiting abrupt increases between these plateaus.
After 1000 iterations, it reaches a relatively high value of $0.55$, which is supported by a strong concentration of the final outcome distribution around the optimal solution.
The approximation ratio follows a similar pattern, remaining constant over the same intervals and showing discrete jumps in between.
We also observe pronounced oscillations with comparatively large amplitude of the gradient -- a behavior that directly results from the large step size.
This aligns with intuition: a large step size is sufficient to approach the optimal solution, but not precise enough to reach it exactly.

\section{\label{section:Conclusion}Outlook and Conclusion}

This work is done under the impression that technological advances of the NISQ era might reach a next stage in a not too far future. 
For this we share the aspirations that quantum computing with deep circuits will become practically feasible.
In this regime the quantum-classical ansatz of variational quantum computing will likely keep its popularity.
Estimating perspectives of practical applicability, spotting new obstacles, and finding promising problem classes is therefore a relevant quest that can already be started today. 
With this work we contribute to the collection of methods, tools, and structural insights that will hopefully lead to a better understanding on why the QAOA practically fails in many examples and when it could in principle work. 

As we have seen, regarding optimization landscapes from a state space perspective allows for a clear analysis that reveals rich but still accessible mathematical structures.
This has to be seen in clear contrast to the, in some extend, more often employed perspective on optimization landscapes in the parameter space of $(\beta,\gamma)\in\mathbb{R}^{2p}$.
The mapping from $(\beta,\gamma)\in\mathbb{R}^{2p}$ does not really respect the natural topology of the problem.
Several, apparently different, local minima and traps in parameter space could for example correspond to one and the same local minimum in state space.
In this sense obstacles that are spotted in the state space picture give us a clear hint towards the persistent geometric core of the underlying problem.
Future work might study which of the geometric properties, we uncovered for the generic case of universal generators, are still present in the non-universal case.
However, as highlighted in \autoref{theorem:UniversalGates}, the optimization problems which do not admit universal generates are merely a null set.

We expect that the analysis of the $\mu$\,-$f$ diagrams, which we introduced as a performance indicator, will turn out as a useful tool for future research. By considering only few basic examples, we merely scratched on the surface of its applicability. Analyzing this indicator and evaluating its impact for practical problem instances will be an essential task for future research.

From the results we have seen so far, it becomes however already clear that there will be no universal applicability of QAOA in deep circuits.
We see that problem instances that avoid unfavorable landscapes must have a specialized underlying structure and are presumable rare.
This is totally in line to the typical observation that no method ever gives a `free lunch'. 

Lastly, we have to point out that many further aspects, that influence the performance of a local search, have been neglected in this work.
This especially includes the actual circuit depth required for approximating the solution of a problem up to a convincing ratio.
For local search routines this depth can be highly problem-specific and might drastically vary with respect to the explicit method in use.
Finding good routines will most likely demand a lot of explicit feature engineering.
Even though we found some promising simple examples in our numerical studies, a conclusive assessment of instances with favorable landscapes can very likely reveal further obstacles.

\subsection*{Acknowledgments}

We thank Daniel Burgarth, Tobias J.\ Osborne, Antonio Rotundo, Bence Marton Temesi, Arne-Christian Voigt, Reinhard F.\ Werner, and Sören Wilkening for helpful discussions.
GK acknowledges financial support by the DAAD and IIT Indore (Kapil Ahuja) for a guest stay. 
LvL and RS acknowledge financial support by the Quantum Valley Lower Saxony.
RS acknowledges financial support by the BMBF project ATIQ.
LB and TZ acknowledge finanical support by the BMBF project QuBRA.

\textbf{Data availability statement:} The depicted data is available upon reasonable request from
the authors.

\twocolumngrid

\bibliographystyle{quantum}
\bibliography{main.bib}

\onecolumngrid

\end{document}